\newcommand{\Rmnum}[1]{\expandafter\@slowromancap\romannumeral #1@}
\newtheorem{theorem}{Theorem}
\newtheorem{lemma}{Lemma}
\newtheorem{remark}{Remark}
\newtheorem{definition}{Definition}
\newtheorem{assumption}{Assumption}
\newtheorem{corollary}{Corollary}
\theoremstyle{definition}
\providecommand{\propositionname}{Proposition}
\patchcmd{\maketitle}{\@fnsymbol}{\@alph}{}{}  
\title{Machine Learning at the Wireless Edge: Distributed Stochastic Gradient Descent Over-the-Air}
\author{\IEEEauthorblockN{Mohammad Mohammadi Amiri,~\IEEEmembership{Student Member,~IEEE}, and\thanks{M. Mohammadi Amiri was with the Department of Electrical and Electronic Engineering, Imperial College London. He is now with the Department of Electrical Engineering, Princeton University, Princeton, NJ 08544, USA (e-mail: mamiri@princeton.edu).}\thanks{D. G\"und\"uz is with the Department of Electrical and Electronic Engineering, Imperial College London, London SW7 2AZ, U.K. (e-mail: d.gunduz@imperial.ac.uk).}\thanks{This work has been supported by the European Research Council (ERC) through Starting Grant BEACON (agreement No. 677854).}
Deniz G\"und\"uz,~\IEEEmembership{Senior Member,~IEEE}}
}
\date{}
\begin{document}

\maketitle


\begin{abstract}
We study collaborative/ federated machine learning (ML) at the wireless edge, where power and bandwidth-limited wireless devices with local datasets carry out distributed stochastic gradient descent (DSGD) with the help of a remote parameter server (PS). Standard approaches assume separate computation and communication, where local gradient estimates are compressed and transmitted to the PS over orthogonal links. Following this \textit{digital} approach, we introduce D-DSGD, in which the wireless devices employ gradient quantization and error accumulation, and transmit their gradient estimates to the PS over the underlying multiple access channel (MAC).
We then introduce a novel \textit{analog} scheme, called A-DSGD, which exploits the additive nature of the wireless MAC for \textit{over-the-air} gradient computation, and provide convergence analysis for this approach. In A-DSGD, the devices first sparsify their gradient estimates, and then project them to a lower dimensional space imposed by the available channel bandwidth. These projections are sent directly over the MAC without employing any digital code. 
Numerical results show that A-DSGD converges faster than D-DSGD thanks to its more efficient use of the limited bandwidth and the natural alignment of the gradient estimates over the channel. The improvement is particularly compelling at low power and low bandwidth regimes. We also illustrate for a classification problem that, A-DSGD is more robust to bias in data distribution across devices, while D-DSGD significantly outperforms other digital schemes in the literature. We also observe that both D-DSGD and A-DSGD perform better by increasing the number of devices (while keeping the total dataset size constant), showing their ability in harnessing the computation power of edge devices. The lack of quantization and channel encoding/decoding in A-DSGD further speeds up communication, making it very attractive for low-latency ML applications at the wireless edge.
\end{abstract}


\section{Introduction}\label{SecIntro}

Many emerging technologies involve massive amounts of data collection, and collaborative intelligence that can process and make sense of this data. Internet of things (IoT), autonomous driving, or extended reality technologies are prime examples, where data from sensors must be continuously collected, communicated, and processed to make inferences about the state of a system, or predictions about its future states. Many specialized machine learning (ML) algorithms are being developed tailored for various types of sensor data; however, the current trend focuses on centralized algorithms, where a powerful learning algorithm, often a neural network, is trained on a massive dataset. While this inherently assumes the availability of data at a central processor, in the case of wireless edge devices, transmitting the collected data to a central processor in a reliable manner may be too costly in terms of energy and bandwidth, may introduce too much delay, and may infringe users' constraints. Therefore, a much more desirable and practically viable alternative is to develop collaborative ML techniques, also known as \textit{federated learning (FL)}, that can exploit the local datasets and processing capabilities of edge devices, requiring limited communications (see \cite{Park:PIEEE:2018} for a survey of applications of edge intelligence and existing approaches to enable it). In this paper, we consider FL at the wireless network edge, where devices with local datasets with the help of a central parameter server (PS) collaboratively train a learning model.  

ML problems often require the minimization of the empirical loss function
\begin{align}\label{LossFunctionDef}
F \left( \boldsymbol{\theta} \right) = \frac{1}{\left| \mathcal{B} \right|} \sum\nolimits_{ \boldsymbol{u} \in \mathcal{B}} f \left(\boldsymbol{\theta}, \boldsymbol{u} \right),   
\end{align}
where $\boldsymbol{\theta} \in \mathbb{R}^d$ denotes the model parameters to be optimized, $\mathcal{B}$ represents the dataset with size $\left| \mathcal{B} \right|$, and $f(\cdot)$ is the loss function defined by the learning model. The minimization of \eqref{LossFunctionDef} is typically carried out through iterative gradient descent (GD), in which the model parameters at the $t$-th iteration, $\boldsymbol{\theta}_t$, are updated according to 
\begin{align}\label{BasicGDModelUpdate}
\boldsymbol{\theta}_{t+1} =\boldsymbol{\theta}_{t} - \eta_t \nabla F \left( \boldsymbol{\theta}_t \right) = \boldsymbol{\theta}_{t} - \eta_t \frac{1}{\left| \mathcal{B} \right|} \sum\limits_{\boldsymbol{u} \in \mathcal{B}} \nabla f \left(\boldsymbol{\theta}_t, \boldsymbol{u} \right),   
\end{align}
where $\eta_t$ is the learning rate at iteration $t$. However, in the case of massive datasets each iteration of GD becomes prohibitively demanding. Instead, in stochastic GD (SGD) the parameter vector is updated with a stochastic gradient
\begin{align}\label{BasicSGDModelUpdate}
\boldsymbol{\theta}_{t+1} =\boldsymbol{\theta}_{t} - \eta_t \cdot \boldsymbol{g} \left(\boldsymbol{\theta}_{t} \right),   
\end{align}
which satisfies $\mathbb{E} \left[ \boldsymbol{g} \left(\boldsymbol{\theta}_{t} \right) \right] = \nabla F \left( \boldsymbol{\theta}_t \right)$. SGD also allows parallelization when the dataset is distributed across tens or even hundreds of devices. In distributed SGD (DSGD), devices process data samples in parallel while maintaining a globally consistent parameter vector $\boldsymbol{\theta}_{t}$. In each iteration, device $m$ computes a gradient vector based on the global parameter vector with respect to its local dataset $\mathcal{B}_{m}$, and sends the result to the PS. Once the PS receives the computed gradients from all the devices, it updates the global parameter vector according to
\begin{align}\label{ParallelSGDModelUpdate}
\boldsymbol{\theta}_{t+1} =\boldsymbol{\theta}_{t} - \eta_t \frac{1}{M} \sum\nolimits_{m=1}^{M} \boldsymbol{g}_m \left(\boldsymbol{\theta}_{t} \right),   
\end{align}
where $M$ denotes the number of devices, and $\boldsymbol{g}_m \left(\boldsymbol{\theta}_{t} \right) \triangleq \frac{1}{\left| \mathcal{B}_{m} \right|} \sum\nolimits_{\boldsymbol{u} \in \mathcal{B}_{m}} \nabla f \left(\boldsymbol{\theta}_t, \boldsymbol{u} \right)$ is the stochastic gradient of the current model computed at device $m$, $m \in [M]$, using the locally available portion of the dataset, $\mathcal{B}_{m}$. 

While the communication bottleneck in DSGD has been widely acknowledged in the literature \cite{QSGDQuantDAlistarh,DCLinHanDeepGradComp,DCOneBitQuan}, ML literature ignores the characteristics of the communication channel, and simply focus on reducing the amount of data transmitted from each device to the PS. In this paper, we consider DSGD over a shared wireless medium, and explicitly model the channel from the devices to the PS, and treat each iteration of the DSGD algorithm as a \textit{distributed wireless computation problem}, taking into account the physical properties of the wireless medium. 
We will provide two distinct approaches for this problem, based on \textit{digital} and \textit{analog} communications, respectively. We will show that analog ``over-the-air'' computation can significantly speed up wireless DSGD, particularly in bandwidth-limited and low-power settings, typically experienced by wireless edge devices.

\subsection{Prior Works}\label{SecPriorWork}

Extensive efforts have been made in recent years to enable collaborative learning across distributed mobile devices. In the FL framework \cite{FLStrategiesMcMahan,McMahan2017CommunicationEfficientLO,GoogleMcMahanFed}, devices collaboratively learn a model with the help of a PS. In practical implementations, bandwidth of the communication channel from the devices to the PS is the main bottleneck in FL \cite{FLStrategiesMcMahan,McMahan2017CommunicationEfficientLO,GoogleMcMahanFed,FLMultiTaskSmith,FLRandomMeanEst,FLDistOptDecenKonecny,FLClientSelNishio,FLUltReliabLowLatBennis}. Therefore, it is essential to reduce the communication requirements of collaborative ML. To this end, three main approaches, namely \textit{quantization}, \textit{sparsification}, and \textit{local updates}, and their various combinations have been considered in the literature. Quantization methods implement lossy compression of the gradient vectors by quantizing each of their entries to a finite-bit low precision value  \cite{DCOneBitQuan,DCWenTernGrad,DorefaZhou,DCWangATOMO,SignSGDBernstein,ErrorCompSGDWu,HardwareLi,QSGDQuantDAlistarh}. Sparsification reduces the communication time by sending only some values of the gradient vectors \cite{ScalableDNNStorm,DCAjiSparse,DCLinHanDeepGradComp,DCSattlerSparseBinary,SparCMLRenggli,SparseConvergenceAlistarh,VarBasedTsuzuku}, and can be considered as another way of lossy compression, but it is assumed that the chosen entries of the gradient vectors are transmitted reliably, e.g., at a very high resolution. Another approach is to reduce the frequency of communication from the devices by allowing local parameter updates \cite{LocalSGDStich,UseLocalSGDLin,FLStrategiesMcMahan,McMahan2017CommunicationEfficientLO,GoogleMcMahanFed,FLMultiTaskSmith,FLRandomMeanEst,FLDistOptDecenKonecny,FLClientSelNishio,FLUltReliabLowLatBennis}. Although there are many papers in the FL literature, most of these works ignore the physical and network layer aspects of the problem, which are essential for FL implementation at the network edge, where the training takes place over a shared wireless medium. \textit{To the best of our knowledge, this is the first paper to address the bandwidth and power limitations in collaborative edge learning by taking into account the physical properties of the wireless medium.}

\subsection{Our Contributions}\label{SecCont}
Most of the current literature on distributed ML and FL ignore the physical layer aspects of communication, and consider interference-and-error-free links from the devices to the PS to communicate their local gradient estimates, possibly in compressed form to reduce the amount of information that must be transmitted. Due to the prevalence of wireless networks and the increasing availability of edge devices, e.g., mobile phones, sensors, etc., for large-scale data collection and learning, we consider a wireless multiple access channel (MAC) from the devices to the PS, through which the PS receives the gradients computed by the devices at each iteration of the DSGD algorithm. The standard approach to this problem, aligned with the literature on FL, is a \textit{separate} approach to computation and communication.


We first follow this separation-based approach, and propose a digital DSGD scheme, which will be called D-DSGD. In this scheme, the devices first compress their gradient estimates to a finite number of bits through \textit{quantization}. Then, some access scheme, e.g., time, frequency or code division multiple access, combined with error correction coding is employed to transmit compressed gradient estimates to the PS in a reliable manner. In this work, to understand the performance limits of the digital approach, we assume capacity-achieving channel codes are utilized at the devices. The optimal solution for this scheme will require carefully allocating channel resources across the devices and the available power of each device across iterations, together with an efficient gradient quantization scheme. For gradient compression, we will consider state-of-the-art quantization approaches together with local error accumulation \cite{DCSattlerSparseBinary}.

It is known that separation-based approaches to distributed compression and computing over wireless channels are sub-optimal in general \cite{Gunduz:IT:09, DCOverTheAirCompGoldenbaum}. We propose an alternative analog communication approach, in which the devices transmit their local gradient estimates directly over the wireless channel, extending our previous work in \cite{MohammadDenizISIT19}. This scheme is motivated by the fact that the PS is not interested in the individual gradient vectors, but in their average, and the wireless MAC automatically provides the PS with the sum of the gradients (plus a noise term). However, the bandwidth available at each iteration may not be sufficient to transmit the whole gradient vector in an uncoded fashion. Hence, to compress the gradients to the dimension of the limited bandwidth resources, we employ an analog compression scheme inspired by compressive sensing, which was previously introduced in  \cite{SparseCastTungDeniz} for analog image transmission over bandwidth-limited wireless channels. In this analog computation scheme, called A-DSGD, devices first sparsify their local gradient estimates (after adding the accumulated local error). These sparsified gradient vectors are projected to the channel bandwidth using a pseudo-random measurement matrix, as in compressive sensing. Then, all the devices transmit the resultant real-valued vectors to the PS simultaneously in an analog fashion, by simply scaling them to meet the average transmit power constraints. The PS tries to reconstruct the sum of the actual sparse gradient vectors from its noisy observation. We use approximate message passing (AMP) algorithm to do this at the PS \cite{AMPJournalDonoho}. We also prove the convergence of the A-DSGD algorithm considering a strongly convex loss function.

Numerical results show that the proposed analog scheme A-DSGD has a better convergence behaviour compared to its digital counterpart, while D-DSGD outperforms the digital schemes studied in \cite{SignSGDBernstein,QSGDQuantDAlistarh}. Moreover, we observe that the performance of A-DSGD degrades only slightly by introducing bias in the datasets across the devices; while the degradation in D-DSGD is larger, it still outperforms other digital approaches \cite{SignSGDBernstein,QSGDQuantDAlistarh} by a large margin. The performances of both A-DSGD and D-DSGD algorithms improve with the number of devices, keeping the total size of the dataset constant, despite the inherent resource sharing approach for D-DSGD. Furthermore, the performance of A-DSGD degrades only slightly even with a significant reduction in the available average transmit power. We argue that, these benefits of A-DSGD are due to the inherent ability of analog communications to benefit from the signal-superposition characteristic of the wireless medium. Also, reduction in the communication bandwidth of the MAC deteriorates the performance of D-DSGD much more compared to A-DSGD. 

A similar over-the-air computation approach is also considered in parallel works \cite{KaibinParallelWork, YangFedLearOverAirComp}. These works consider, respectively, SISO and SIMO fading MACs from the devices to the PS, and focus on aligning the gradient estimates received from different devices to have the same power at the PS to allow correct computation by performing power control and device selection. Our work is extended to a fading MAC model in \cite{FLMohammadDenizTWCFading}. The distinctive contributions of our work with respect to \cite{KaibinParallelWork} and \cite{YangFedLearOverAirComp} are i) the introduction of a purely digital separate gradient compression and communication scheme; ii) the consideration of a bandwidth-constrained channel, which requires (digital or analog) compression of the gradient estimates; iii) error accumulation at the devices to improve the quality of gradient estimates by keeping track of the information lost due to compression; and iv) power allocation across iterations to dynamically adapt to the diminishing gradient variance. We also remark that both \cite{KaibinParallelWork} and \cite{YangFedLearOverAirComp} consider transmitting model updates, while we focus on gradient transmission, which is more energy-efficient as each device transmits only the innovation obtained through gradient descent at that particular iteration (together with error accumulation), whereas model transmission wastes a significant portion of the transmit power by sending the already known previous model parameters from all the devices. Our work can easily be combined with the federated averaging algorithm in \cite{McMahan2017CommunicationEfficientLO}, where multiple SGD iterations are carried out locally before forwarding the models differences to the PS. We can also apply momentum correction \cite{DCLinHanDeepGradComp}, which improves the convergence speed of the DSGD algorithms with communication delay.


\subsection{Notations}\label{SecNot}
$\mathbb{R}$ represents the set of real values. For positive integer $i$, we let $[i] \triangleq \{ 1, \dots, i \}$, and $\boldsymbol{1}_{i}$ denotes a column vector of length $i$ with all entries $1$. $\mathcal{N} \left( 0,\sigma^2 \right)$ denotes a zero-mean normal distribution with variance $\sigma^2$. We denote the cardinality of set $\mathcal{A}$ by $\left| \mathcal{A} \right|$, and $l_2$ norm of vector $\boldsymbol{x}$ by $\left\| \boldsymbol{x} \right\|$. Also, we represent the $l_2$ induced norm of a rectangular matrix $\boldsymbol{A}$ by $\left\| \boldsymbol{A} \right\|$.


\begin{figure}[!t]
\centering
\includegraphics[scale=0.45]{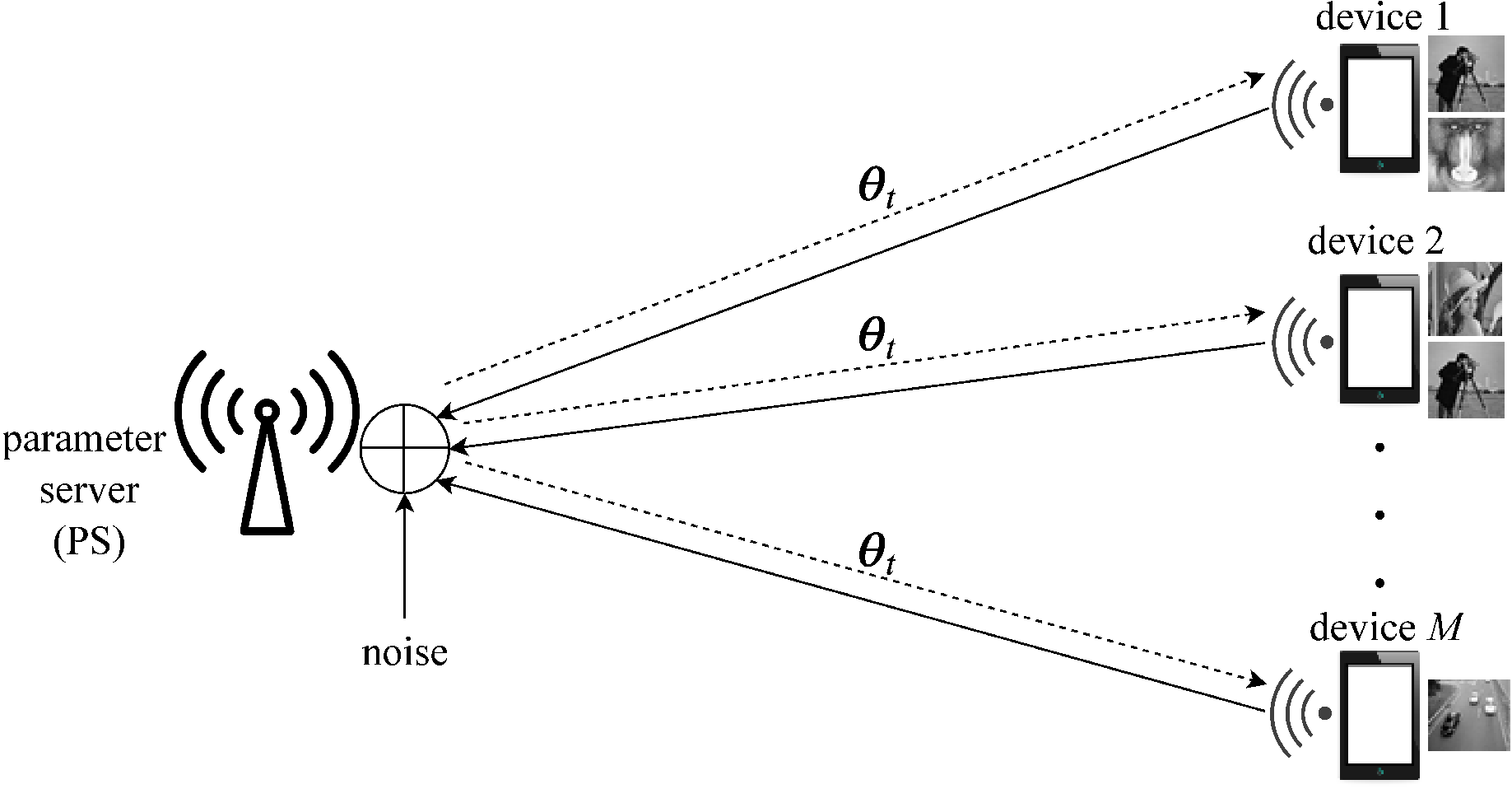}
\caption{Illustration of the FL framework at the wireless edge.} 
\label{System_Model}
\end{figure}

\section{System Model}\label{SecProbFormul}

We consider federated SGD at the wireless edge, where $M$ wireless edge devices employ SGD with the help of a remote PS, to which they are connected through a noisy wireless MAC (see Fig. \ref{System_Model}). Let $\mathcal{B}_{m}$ denote the set of data samples available at device $m$, $m \in [M]$, and $\boldsymbol{g}_m \left(\boldsymbol{\theta}_{t} \right) \in \mathbb{R}^d$ be the stochastic gradient computed by device $m$ using local data samples. At each iteration of the DSGD algorithm in \eqref{ParallelSGDModelUpdate}, the local gradient estimates of the devices are sent to the PS over $s$ uses of a Gaussian MAC, characterized by:
\begin{align}\label{ReceivedVectorPSGen}
\boldsymbol{y}(t) = \sum\nolimits_{m=1}^{M}  \boldsymbol{x}_{m}(t) + \boldsymbol{z}(t),
\end{align}
where $\boldsymbol{x}_{m} (t) \in \mathbb{R}^s$ is the length-$s$ channel input vector transmitted by device $m$ at iteration $t$, $\boldsymbol{y} (t) \in \mathbb{R}^s$ is the channel output received by the PS, and $\boldsymbol{z}(t) \in \mathbb{R}^s$ is the independent additive white Gaussian noise (AWGN) vector with each entry independent and identically distributed (i.i.d.) according to $\mathcal{N} \left( 0, \sigma^2 \right)$. Since we focus on DSGD, the channel input vector of device $m$ at iteration $t$ is a function of the current parameter vector $\boldsymbol{\theta}_{t}$ and the local dataset $\mathcal{B}_{m}$, and more specifically the current gradient estimate at device $m$, $\boldsymbol{g}_m \left(\boldsymbol{\theta}_{t} \right)$, $ m \in [M]$. For a total of $T$ iterations of DSGD algorithm, the following average transmit power constraint is imposed per eache device:
\begin{align}\label{AvePowerConsGen}
\frac{1}{T} \sum\nolimits_{t=1}^{T} ||\boldsymbol{x}_{m} (t)||^2_2 \le \bar{P}, \quad \forall m \in [M],
\end{align}
averaged over iterations of the DSGD algorithm. The goal is to recover the average of the local gradient estimates $\frac{1}{M} \sum\nolimits_{m=1}^{M} \boldsymbol{g}_m \left(\boldsymbol{\theta}_{t} \right)$ at the PS, and update the model parameter as in \eqref{ParallelSGDModelUpdate}. However, due to the pre-processing performed at each device and the noise added by the wireless channel, it is not possible to recover the average gradient perfectly at the PS, and instead, it uses a noisy estimate to update the model parameter vector; i.e., we have $\boldsymbol{\theta}_{t+1} = \phi (\boldsymbol{\theta}_{t}, \boldsymbol{y}(t))$ for some update function $\phi : \mathbb{R}^d \times \mathbb{R}^s \to \mathbb{R}^d$. The updated model parameter is then multicast to the devices by the PS through an error-free shared link. We assume that the PS is not limited in power or bandwidth, so the devices receive a consistent global parameter vector for their computations in the next iteration. 

The transmission of the local gradient estimates, $\boldsymbol{g}_m \left(\boldsymbol{\theta}_{t} \right)$, $\forall m \in [M]$, to the PS with the goal of PS reconstructing their average can be considered as a distributed function computation problem over a MAC \cite{DCOverTheAirCompGoldenbaum}. We will consider both a digital approach treating computation and communication separately, and an analog approach that does not use any coding, and instead applies gradient sparsification followed by a linear transformation to compress the gradients, which are then transmitted simultaneously in an uncoded fashion.

We remark that we assume a simple channel model (Gaussian MAC) for the bandwidth-limited uplink transmission from the devices to the PS in this paper in order to highlight the main ideas behind the proposed digital and analog collaborative learning. The digital and analog approaches proposed in this paper can be extended to more complicated channel models as it has been done in the follow up works \cite{FLMohammadDenizSPAWC19,FLMohammadDenizTWCFading,FLOsvaldoDistillation,FLMohammadTolgaDenizSIPBlind}.

\section{Digital DSGD (D-DSGD)}\label{SecPeoposedDigital}


In this section, we present DSGD at the wireless network edge utilizing digital compression and transmission over the wireless MAC, referred to as the digital DSGD (D-DSGD) algorithm. Since we do not know the variances of the gradient estimates at different devices, we allocate the power equally among the devices, so that device $m$ sends $\boldsymbol{x}_{m} (t)$ with power $P_t$, i.e., $||\boldsymbol{x}_{m} (t)||^2_2 = P_t$, where $P_t$ values are chosen to satisfy the average transmit power constraint over $T$ iterations
\begin{align}\label{AvePowerConsD-DSGDAlg}
\frac{1}{T} \sum\nolimits_{t=1}^{T} P_t \le \bar{P}. 
\end{align}
Due to the intrinsic symmetry of the model, we assume that the devices transmit at the same rate at each iteration (while the rate may change across iterations depending on the allocated power, $P_t$). Accordingly, the total number of bits that can be transmitted from each of the devices over $s$ uses of the Gaussian MAC, described in \eqref{ReceivedVectorPSGen}, is upper bounded by  
\begin{align}\label{SumCapacityWMAC}
R_t = \frac{s}{2M} \log_2 \left( 1 + \frac{M P_t}{s \sigma^2} \right),
\end{align}
where $M P_t / s$ is the sum-power per channel use. Note that this is an upper bound since it is the Shannon capacity of the underlying Gaussian MAC, and we further assumed that the capacity can be achieved over a finite blocklength of $s$.

\begin{remark}\label{RemDifNumBitsDig}
We note that having a distinct sum power $M P_t$ at each iteration $t$ enables each user to transmit different numbers of bits at different iterations. This corresponds to a novel gradient compression scheme for DSGD, in which the devices can adjust over time the amount of information they send to the PS about their gradient estimates. They can send more information bits at the beginning of the DSGD algorithm when the gradient estimates have higher variances, and reduce the number of transmitted bits over time as the variance decreases. We observed empirically that this improves the performance compared to the standard approach in the literature, where the same compression scheme is applied at each iteration \cite{DCSattlerSparseBinary}. 
\end{remark}    


We will adopt the scheme proposed in \cite{DCSattlerSparseBinary} for gradient compression at each iteration of the DSGD scheme.
We modify this scheme by allowing different numbers of bits to be transmitted by the devices at each iteration. 
At each iteration the devices sparsify their gradient estimates as described below. In order to retain the accuracy of their local gradient estimates, devices employ \textit{error accumulation} \cite{DCOneBitQuan,Strom2015ScalableDD}, where the accumulated error vector at device $m$ until iteration $t$ is denoted by ${\boldsymbol{\Delta}_{m} (t)} \in \mathbb{R}^d$, where we set ${\boldsymbol{\Delta}_{m} (0)} = \boldsymbol{0}$, $\forall m \in [M]$. Hence, after the computation of the local gradient estimate for parameter vector $\boldsymbol{\theta}_t$, i.e., $\boldsymbol{g}_m \left(\boldsymbol{\theta}_t \right)$, device $m$ updates its estimate with the accumulated error as $\boldsymbol{g}_m \left(\boldsymbol{\theta}_t \right) + {\boldsymbol{\Delta}_{m} (t)}$, $m \in [M]$. At iteration $t$, device $m$, $m \in [M]$, sets all but the highest $q_t$ and the smallest $q_t$ of the entries of its gradient estimate vector $\boldsymbol{g}_m \left(\boldsymbol{\theta}_t \right) + {\boldsymbol{\Delta}_{m} (t)}$, of dimension $d$, to zero, where $q_t \le d/2$ (to have a communication-efficient scheme, in practice, the goal is to have $q_t \ll  d$, $\forall t$). Then, it computes the mean values of all the remaining positive entries and all the remaining negative entries, denoted by $\mu^{+}_{m} (t)$ and $\mu^{-}_{m} (t)$, respectively. If $\mu^{+}_{m} (t) > \left| \mu^{-}_{m} (t) \right|$, then it sets all the entries with negative values to zero and all the entries with positive values to $\mu^{+}_{m} (t)$, and vice versa. We denote the resulting sparse vector at device $m$ by ${\boldsymbol{g}}^q_{m} \left( \boldsymbol{\theta}_t \right)$, and device $m$ updates the local accumulated error vector as ${\boldsymbol{\Delta}_{m} (t+1)} = \boldsymbol{g}_m \left(\boldsymbol{\theta}_t \right) + {\boldsymbol{\Delta}_{m} (t)} - {\boldsymbol{g}}^q_{m} \left( \boldsymbol{\theta}_t \right)$, $m \in [M]$. It then aims to send ${\boldsymbol{g}}^q_{m} \left( \boldsymbol{\theta}_t \right)$ over the channel by transmitting its mean value and the positions of its non-zero entries. For this purpose, we use a 32-bit representation of the absolute value of the mean (either $\mu^{+}_{m} (t)$ or $\left| \mu^{-}_{m} (t) \right|$) along with 1 bit indicating its sign. To send the positions of the non-zero entries, it is assumed in \cite{DCSattlerSparseBinary} that the distribution of the distances between the non-zero entries is geometrical with success probability $q_t$, which allows them to use Golomb encoding to send these distances with a total number of bits $b^* + \frac{1}{1-(1-q_t)^{2^{b^*}}}$,
where $b^* = 1 + \left\lfloor {{{\log }_2}\left( {\frac{{\log \left( {(\sqrt 5  - 1)/2} \right)}}{{\log \left( {1 - {q_t}} \right)}}} \right)} \right\rfloor$. However, we argue that, sending $\log_2 \binom{d}{q_t}$ bits to transmit the positions of the non-zero entries is sufficient regardless of the distribution of the positions. This can be achieved by simply enumerating all possible sparsity patterns. Thus, with D-DSGD, the total number of bits sent by each device at iteration $t$ is given by 
\begin{align}\label{SWMSNumberBits}
r_t = \log_2 \binom{d}{q_t} + 33,
\end{align}
where $q_t$ is chosen as the highest integer satisfying $r_t \le R_t$.

We will also study the impact of introducing more devices into the system. With the reducing cost of sensing and computing devices, we can consider introducing more devices, each coming with its own power resources. We assume that the size of the total dataset remains constant, which allows each sensor to save computation time and energy. Assume that the number of devices is increased by a factor $\kappa > 1$. We assume that the total power consumed by the devices at iteration $t$ also increases by factor $\kappa$. We can see that the maximum number of bits that can be sent by each device is strictly smaller in the new system. this means that the PS receives a less accurate gradient estimate from each device, but from more devices. 
Numerical results for the D-DSGD scheme and its comparison to analog transmission are relegated to Section \ref{SecExperiments}.

\begin{remark}\label{RemDigital_SDGD_QDSGD}
We have established a framework for digital transmission of the gradients over a band-limited wireless channel, where other gradient quantization techniques can also be utilized. In Section \ref{SecExperiments}, we will compare the performance of the proposed D-DSGD scheme with that of sign DSGD (S-DSGD) \cite{SignSGDBernstein} and quantized DSGD (Q-DSGD) \cite{QSGDQuantDAlistarh}, adopted for digital transmission over a limited capacity MAC.   
\end{remark}


\section{Analog DSGD (A-DSGD)}\label{SecPeoposedAnalog}

Next, we propose an analog DSGD algorithm, called A-DSGD, which does not employ any digital code, either for compression or channel coding, and instead all the devices transmit their gradient estimates simultaneously in an uncoded manner. This is motivated by the fact that the PS is not interested in the individual gradients, but only in their average. The underlying wireless MAC provides the sum of the gradients, which is used by the PS to update the parameter vector. See Algorithm \ref{A-DSGD_alg} for a description of the A-DSGD scheme. 

Similarly to D-DSGD, devices employ local error accumulation. Hence, after computing the local gradient estimate for parameter vector $\boldsymbol{\theta}_t$, each device updates its estimate with the accumulated error as $\boldsymbol{g}_m^{ec} \left(\boldsymbol{\theta}_t \right) \triangleq \boldsymbol{g}_m \left(\boldsymbol{\theta}_t \right) + {\boldsymbol{\Delta}_{m}(t)}$, $m \in [M]$.


The challenge in the analog transmission approach is to compress the gradient vectors to the available channel bandwidth. In many modern ML applications, such as deep neural networks, the parameter vector, and hence the gradient vectors, have extremely large dimensions, whereas the channel bandwidth, measured by parameter $s$, is small due to the bandwidth and latency limitations. Thus, transmitting all the model parameters one-by-one in an uncoded/analog fashion is not possible as we typically have $d \gg s$. 

Lossy compression at any required level is at least theoretically possible in the digital domain. For the analog scheme, in order to reduce the dimension of the gradient vector to that of the channel, the devices apply gradient sparsification. In particular, device $m$ sets all but the $k$ elements of the error-compensated resulting vector $\boldsymbol{g}_m^{ec} \left(\boldsymbol{\theta}_t \right)$ with the highest magnitudes to zero. We denote the sparse vector at device $m$ by $\boldsymbol{g}_m^{sp} \left(\boldsymbol{\theta}_t \right)$, $m \in [M]$. This $k$-level sparsification is represented by function ${\rm{sp}}_k$ in Algorithm \ref{ModelUpdateAlg}, i.e., $\boldsymbol{g}_m^{sp} \left(\boldsymbol{\theta}_t \right) = {\rm{sp}}_k \left( \boldsymbol{g}_m^{ec} \left(\boldsymbol{\theta}_t \right) \right)$. The accumulated error at device $m$, $m \in [M]$, which is the difference between $\boldsymbol{g}_m^{ec} \left(\boldsymbol{\theta}_t \right)$ and its sparsified version, i.e., it maintains those elements of vector $\boldsymbol{g}_m^{ec} \left(\boldsymbol{\theta}_t \right)$ which are set to zero as the result of sparsification at iteration $t$, is then updated according to 
\begin{align}\label{AccumErrorUpdateItet}
\boldsymbol{\Delta}_{m} (t+1) =& \boldsymbol{g}_m^{ec} \left(\boldsymbol{\theta}_t \right) - \boldsymbol{g}_m^{sp} \left(\boldsymbol{\theta}_t \right)= \boldsymbol{g}_m \left(\boldsymbol{\theta}_t \right) + \boldsymbol{\Delta}_{m} (t)  \nonumber\\
&\qquad \qquad \quad - {\rm{sp}}_k \left( \boldsymbol{g}_m \left(\boldsymbol{\theta}_t \right) + \boldsymbol{\Delta}_{m} (t) \right).
\end{align}

We would like to transmit only the non-zero entries of these sparse vectors. However, simply ignoring the zero elements would require transmitting their indeces to the PS separately. To avoid this additional data transmission, we will employ a random projection matrix, similarly to compressive sensing. A similar idea is recently used in \cite{SparseCastTungDeniz} for analog image transmission over a bandwidth-limited channel. 

\begin{algorithm}[t]
\caption{A-DSGD}
\label{ModelUpdateAlg}
\begin{algorithmic}[1]
\Statex
\State{\textbf{Initialize} $\boldsymbol{\theta}_0 = \boldsymbol{0}$ and $\boldsymbol{\Delta}_{1} (0) = \cdots = \boldsymbol{\Delta}_{M} (0) = \boldsymbol{0}$}
\For {$t = 0, \ldots, T-1$}
\Statex
\begin{itemize}
\item \textbf{devices do:}
\end{itemize}
\For {$m = 1, \ldots, M$ in parallel}
\State{Compute $\boldsymbol{g}_m \left( \boldsymbol{\theta}_t \right)$ with respect to $\mathcal{B}_{m}$}
\State{$\boldsymbol{g}_m^{ec} \left( \boldsymbol{\theta}_t \right) = \boldsymbol{g}_m \left( \boldsymbol{\theta}_t \right) + {\boldsymbol{\Delta}_{m} (t)}$}
\State{$\boldsymbol{g}_m^{sp} \left( \boldsymbol{\theta}_t \right) = {\rm{sp}}_k \left( \boldsymbol{g}_m^{ec} \left( \boldsymbol{\theta}_t \right) \right)$}
\State{$\boldsymbol{\Delta}_{m} (t+1) = \boldsymbol{g}_m^{ec} \left( \boldsymbol{\theta}_t \right) - \boldsymbol{g}_m^{sp} \left( \boldsymbol{\theta}_t \right)$}
\State{$\tilde{\boldsymbol{g}}_m \left( \boldsymbol{\theta}_{t} \right) = \boldsymbol{A}_{s-1} \boldsymbol{g}_m^{sp} \left( \boldsymbol{\theta}_t \right)$}
\State{$\boldsymbol{x}_{m} \left( t \right) = \begin{bmatrix}
    {\sqrt{\alpha_{m}(t)} \tilde{\boldsymbol{g}}_m \left(\boldsymbol{\theta}_t \right)}^T &
    \sqrt{\alpha_{m}(t)} 
\end{bmatrix}^T$}
\EndFor
\Statex
\begin{itemize}
\item \textbf{PS does:}
\end{itemize}
\State{\quad \; $\hat{\boldsymbol{g}} \left(\boldsymbol{\theta}_{t} \right) = {\rm{AMP}}_{\boldsymbol{A}_{s-1}} \left( \frac{1}{y_{s} \left( t \right)} \boldsymbol{y}^{s-1} \left( t \right) \right)$}
\State{\quad \; $\boldsymbol{\theta}_{t+1} =\boldsymbol{\theta}_{t} - \eta_t \cdot \hat{\boldsymbol{g}} \left(\boldsymbol{\theta}_{t} \right)$}
\EndFor
\end{algorithmic}\label{A-DSGD_alg}
\end{algorithm}

A pseudo-random matrix $\boldsymbol{A}_{\tilde{s}} \in \mathbb{R}^{\tilde{s} \times d}$, for some $\tilde{s} \le s$, with each entry i.i.d. according to $\mathcal{N} (0,1/\tilde{s})$, is generated and shared between the PS and the devices before starting the computations. At iteration $t$, device $m$ computes $\tilde{\boldsymbol{g}}_m \left(\boldsymbol{\theta}_{t} \right) \triangleq \boldsymbol{A}_{\tilde{s}} \boldsymbol{g}_m^{sp} \left(\boldsymbol{\theta}_t \right) \in \mathbb{R}^{\tilde{s}}$, and transmits $\boldsymbol{x}_{m}(t) = \begin{bmatrix}
    {\sqrt{\alpha_{m}(t)} \tilde{\boldsymbol{g}}_m \left(\boldsymbol{\theta}_{t} \right)}^T & {\boldsymbol{a}_{m} (t)}^T
\end{bmatrix}^T$, where ${\boldsymbol{a}_{m} (t)} \in \mathbb{R}^{s - \tilde{s}}$, over the MAC, $m \in [M]$, while satisfying the average power constraint \eqref{AvePowerConsGen}. The PS receives 
\begin{align}\label{ReceivedVectorPSDSGDCS}
\boldsymbol{y} \left( t \right) & = \sum\nolimits_{m=1}^{M} {\boldsymbol{x}}_{m} \left( t \right) + \boldsymbol{z} (t) \nonumber\\
& = \begin{bmatrix}
    \boldsymbol{A}_{\tilde{s}} \sum\nolimits_{m=1}^{M} \sqrt{\alpha_{m}(t)} \boldsymbol{g}_m^{sp} \left(\boldsymbol{\theta}_t \right) \\
    \sum\nolimits_{m=1}^{M} {\boldsymbol{a}_{m}(t)} 
\end{bmatrix}  
+ \boldsymbol{z} (t).
\end{align}
In the following, we propose a power allocation scheme designing the scaling coefficients $\sqrt{\alpha_m(t)}$, $\forall m,t$.


We set $\tilde{s} = s-1$, which requires $s \ge 2$. At iteration $t$, we set ${\boldsymbol{a}_{m}(t)} = \sqrt{\alpha_{m}(t)}$, and device $m$ computes $\tilde{\boldsymbol{g}}_m \left(\boldsymbol{\theta}_{t} \right) = \boldsymbol{A}_{s-1} \boldsymbol{g}_m^{sp} \left(\boldsymbol{\theta}_t \right)$, and sends vector $\boldsymbol{x}_{m} \left( t \right) = \begin{bmatrix}
    {\sqrt{\alpha_{m}(t)} \tilde{\boldsymbol{g}}_m \left(\boldsymbol{\theta}_t \right)}^T &
    \sqrt{\alpha_{m}(t)} 
\end{bmatrix}^T$ with the same power $P_t = || \boldsymbol{x}_{m} \left( t \right) ||_2^2$ satisfying the average power constraint $\frac{1}{T} \sum\nolimits_{t=1}^{T} P_t \le \bar{P}$, for $m \in [M]$. Accordingly, scaling factor $\sqrt{\alpha_{m}(t)}$ is determined to satisfy
\begin{align}\label{ScaleFacCalcUPA1}
{P}_t = \alpha_{m}(t) \left(  \left\| \tilde{\boldsymbol{g}}_m \left( \boldsymbol{\theta}_t \right) \right\|_2^2 + 1 \right),
\end{align}
which yields
\begin{align}\label{ScaleFacCalcUPA2}
\alpha_{m}(t) = \frac{{P}_t}{\left\| \tilde{\boldsymbol{g}}_m \left( \boldsymbol{\theta}_t \right) \right\|_2^2 + 1}, \quad \mbox{for $m \in [M]$}. 
\end{align}

Since $\left\| \tilde{\boldsymbol{g}}_m \left( \boldsymbol{\theta}_t \right) \right\|_2^2$ may vary across devices, so can $\sqrt{\alpha_{m}(t)}$. That is why, at iteration $t$, device $m$ allocates one channel use to provide the value of $\sqrt{\alpha_{m}(t)}$ to the PS along with its scaled low-dimensional gradient vector $\tilde{\boldsymbol{g}}_m \left( \boldsymbol{\theta}_t \right)$, $m \in [M]$. Accordingly, the received vector at the PS is given by
\begin{align}\label{ReceivedVectorPSUPA}
\boldsymbol{y} \left(\boldsymbol{\theta}_{t} \right) = \begin{bmatrix}
    \boldsymbol{A}_{s-1} \sum\nolimits_{m=1}^{M} \sqrt{\alpha_{m}(t)} \boldsymbol{g}_m^{sp} \left(\boldsymbol{\theta}_t \right) \\
    \sum\nolimits_{m=1}^{M} {\sqrt{\alpha_{m}(t)}} 
\end{bmatrix} + \boldsymbol{z} (t),
\end{align}
where $\alpha_{m}(t)$ is replaced by \eqref{ScaleFacCalcUPA2}. For $i \in [s]$, we define
\begin{align}\label{DefVec1toi}
\boldsymbol{y}^{i} \left( t \right) & \triangleq \begin{bmatrix}
    y_1 \left( t \right) & y_2 \left( t \right) & \cdots &  y_i \left( t \right)
\end{bmatrix}^T,\\
\boldsymbol{z}^{i} (t) & \triangleq \begin{bmatrix}
    z_{1} (t) & z_{2} (t) & \cdots &  z_{i} (t)
\end{bmatrix}^T,
\end{align}
where $y_j \left( t \right)$ and $z_{j}(t)$ denote the $j$-th entry of $\boldsymbol{y} \left( t \right)$ and $\boldsymbol{z} (t)$, respectively. Thus, we have
\begin{subequations}
\label{ReceivedVectorPSUPADecomposed}
\begin{align}\label{ReceivedVectorPSUPADecomposeds_1}
\boldsymbol{y}^{s-1} \left( t \right) & = \boldsymbol{A}_{s-1} \sum\nolimits_{m=1}^{M} \sqrt{\alpha_{m}(t)} \boldsymbol{g}_m^{sp} \left(\boldsymbol{\theta}_t \right) + \boldsymbol{z}^{s-1} (t), \\
y_{s} \left( t \right) &= \sum\nolimits_{m=1}^{M} \sqrt{\alpha_{m}(t)} + z_{s}(t).  
\label{ReceivedVectorPSUPADecomposeds}
\end{align}
\end{subequations}
Note that the goal is to recover $\frac{1}{M}\sum\nolimits_{m=1}^{M} \boldsymbol{g}_m^{sp} \left(\boldsymbol{\theta}_{t} \right)$ at the PS, while, from $\boldsymbol{y}^{s-1} \left( t \right)$ given in \eqref{ReceivedVectorPSUPADecomposeds_1}, the PS observes a noisy version of the weighted sum $\sum\nolimits_{m=1}^{M} \sqrt{\alpha_{m}(t)} \boldsymbol{g}_m^{sp} \left(\boldsymbol{\theta}_t \right)$ projected into a low-dimensional vector through $\boldsymbol{A}_{s-1}$. According to \eqref{ScaleFacCalcUPA2}, each value of $\left\| \tilde{\boldsymbol{g}}_m \left( \boldsymbol{\theta}_t \right) \right\|_2^2$ results in a distinct scaling factor $\alpha_{m}(t)$. However, for large enough $d$ and $\left| \mathcal{B}_{m} \right|$, the values of $\left\| \tilde{\boldsymbol{g}}_m \left( \boldsymbol{\theta}_t \right) \right\|_2^2, \forall m \in [M]$, are not going to be too different across devices. 
As a result, scaling factors $\sqrt{\alpha_{m}(t)}, \forall m \in [M]$, are not going to be very different either. Accordingly, to diminish the effect of scaled gradient vectors, we choose to scale down the received vector $\boldsymbol{y}^{s-1} \left( t \right)$ at the PS, given in \eqref{ReceivedVectorPSUPADecomposeds_1}, with the sum of the scaling factors, i.e., $\sum\nolimits_{m=1}^{M} \sqrt{\alpha_{m} (t)}$, whose noisy version is received by the PS as $y_{s} \left( t \right)$ given in \eqref{ReceivedVectorPSUPADecomposeds}. The resulting scaled vector at the PS is 
\begin{align}\label{ReceivedVectorPSGenDeScaleUPA}
\frac{\boldsymbol{y}^{s-1} \left( t \right)}{y_{s} \left( t \right)} = & \boldsymbol{A}_{s-1} \sum\nolimits_{m=1}^{M} \frac{ \sqrt{\alpha_{m} (t)}}{\sum\nolimits_{i=1}^{M} \sqrt{\alpha_{i} (t)} + z_{s} (t)} \boldsymbol{g}_m^{sp} \left(\boldsymbol{\theta}_t \right) \nonumber\\
& + \frac{1}{\sum\nolimits_{i=1}^{M} \sqrt{\alpha_{i} (t)} + z_{s} (t)} \boldsymbol{z}^{s-1} (t), 
\end{align}
where $\alpha_{m} (t)$, $m \in [M]$, is given in \eqref{ScaleFacCalcUPA2}. By our choice, the PS tries to recover $\frac{1}{M}\sum\nolimits_{m=1}^{M} \boldsymbol{g}_m^{sp} \left(\boldsymbol{\theta}_{t} \right)$ from $\boldsymbol{y}^{s-1} \left( t \right) / {y}_{s} \left( t \right)$ knowing the measurement matrix $\boldsymbol{A}_{s-1}$. The PS estimates $\hat{\boldsymbol{g}} \left(\boldsymbol{\theta}_{t} \right)$ using the AMP algorithm. The estimate $\hat{\boldsymbol{g}} \left(\boldsymbol{\theta}_{t} \right)$ is then used to update the model parameter as 
$\boldsymbol{\theta}_{t+1} =\boldsymbol{\theta}_{t} - \eta_t \cdot \hat{\boldsymbol{g}} \left(\boldsymbol{\theta}_{t} \right)$.


\begin{remark}\label{RemDifEPAandUPA}
We remark here that, with SGD the empirical variance of the stochastic gradient vectors reduce over time approaching zero asymptotically. The power should be allocated over iterations taking into account this decaying behaviour of gradient variance, while making sure that the noise term would not become dominant. To reduce the variation in the scaling factors $\sqrt{\alpha_{m}(t)}, \forall m, t$, variance reduction techniques can be used \cite{AcceleratingVarianceJohnson}. We also note that setting $P_t = \bar{P}$, $\forall t$, results in a special case, where the power is allocated uniformly over time to be resistant against the noise term. 
\end{remark}

\begin{remark}\label{RemAnalogm}
Increasing $M$ can help increase the convergence speed for A-DSGD. This is due to the fact that having more signals superposed over the MAC leads to more robust transmission against noise, particularly when the ratio $\bar{P} / s \sigma^2$ is relatively small, as we will observe in Fig. \ref{ADSGD_DDSGD_M_B}. Also, for a larger $M$ value, $\frac{1}{M} \sum\nolimits_{m=1}^{M} \boldsymbol{g}_m^{sp} \left(\boldsymbol{\theta}_{t} \right)$ provides a better estimate of $\frac{1}{M} \sum\nolimits_{m=1}^{M} \boldsymbol{g}_m \left(\boldsymbol{\theta}_{t} \right)$, and receiving information from a larger number of devices can make these estimates more reliable.
\end{remark}

\begin{remark}\label{RemTradeoffkands}
In the proposed A-DSGD algorithm, the sparsification level, $k$, results in a trade-off. For a relatively small value of $k$, $\frac{1}{M} \sum\nolimits_{m=1}^{M} \boldsymbol{g}_m^{sp} \left(\boldsymbol{\theta}_t \right)$ can be more reliably recovered from $\frac{1}{M} \sum\nolimits_{m=1}^{M} \tilde{\boldsymbol{g}}_m \left( \boldsymbol{\theta}_{t} \right)$; however it may not provide an accurate estimate of the actual average gradient $\frac{1}{M} \sum\nolimits_{m=1}^{M} \boldsymbol{g}_m \left(\boldsymbol{\theta}_t \right)$. Whereas, with a higher $k$ value, $\frac{1}{M} \sum\nolimits_{m=1}^{M} \boldsymbol{g}_m^{sp} \left(\boldsymbol{\theta}_t \right)$ provides a better estimate of $\frac{1}{M} \sum\nolimits_{m=1}^{M} \boldsymbol{g}_m \left(\boldsymbol{\theta}_t \right)$, but reliable recovery of $\frac{1}{M} \sum\nolimits_{m=1}^{M} \boldsymbol{g}_m^{sp} \left(\boldsymbol{\theta}_t \right)$ from the vector $\frac{1}{M} \sum\nolimits_{m=1}^{M} \tilde{\boldsymbol{g}}_m \left( \boldsymbol{\theta}_{t} \right)$ is less likely.     
\end{remark}



\subsection{Mean-Removal for Efficient Transmission}\label{SubSecMean}
To have a more efficient usage of the available power, each device can remove the mean value of its gradient estimate before scaling and sending it. We define the mean value of $\tilde{\boldsymbol{g}}_m \left( \boldsymbol{\theta}_t \right) = \boldsymbol{A}_{\tilde{s}} \boldsymbol{g}_m \left(\boldsymbol{\theta}_t \right)$ as
${\mu}_{m} (t) \triangleq \frac{1}{\tilde{s}} \sum\nolimits_{i=1}^{\tilde{s}} {\tilde{g}}_{m,i} \left( \boldsymbol{\theta}_t \right)$, for $m \in [M]$,
where ${\tilde{g}}_{m,i} \left( \boldsymbol{\theta}_t \right)$ is the $i$-th entry of vector $\tilde{\boldsymbol{g}}_m \left( {\boldsymbol\theta}_t \right)$, $i \in [\tilde{s}]$. We also define $\tilde{\boldsymbol{g}}_m^{az} \left( \boldsymbol{\theta}_t \right) \triangleq \tilde{\boldsymbol{g}}_m \left( \boldsymbol{\theta}_t \right) - {\mu}_{m} (t) \boldsymbol{1}_{\tilde{s}}$, $m \in [M]$. The power of vector $\tilde{\boldsymbol{g}}_m^{az} \left( \boldsymbol{\theta}_t \right)$ is given by
\begin{align}\label{PowerZeroMeanVec}
\left\| \tilde{\boldsymbol{g}}_m^{az} \left( \boldsymbol{\theta}_t \right) \right\|_2^2 = \left\| \tilde{\boldsymbol{g}}_m \left( \boldsymbol{\theta}_t \right) \right\|_2^2 - \tilde{s} {\mu}^2_{m} (t).
\end{align}

We set $\tilde{s} = s-2$, which requires $s \ge 3$. We also set ${\boldsymbol{a}_{m} (t)} = \begin{bmatrix}
    \sqrt{\alpha^{az}_{m} (t)} {\mu}_{m} (t) & \sqrt{\alpha^{az}_{m} (t)}
\end{bmatrix}^T$, and after computing $\tilde{\boldsymbol{g}}_m \left(\boldsymbol{\theta}_{t} \right) = \boldsymbol{A}_{s-2} \boldsymbol{g}_m^{sp} \left(\boldsymbol{\theta}_t \right)$, device $m$, $m \in [M]$, sends 
\begin{align}\label{devicemSendsMR_UPA}
\boldsymbol{x}_{m} \left( t \right) = \begin{bmatrix}
    {\sqrt{\alpha_{m}^{az} (t)} \tilde{\boldsymbol{g}}_m^{az} \left(\boldsymbol{\theta}_t \right)}^T &
    \sqrt{\alpha^{az}_{m} (t)} {\mu}_{m} (t) & \sqrt{\alpha^{az}_{m} (t)}
\end{bmatrix}^T, 
\end{align}
with power
\begin{align}\label{InsPowerUPAZM}
{\small{ || \boldsymbol{x}_{m} \left( t \right) ||_2^2 = \alpha^{az}_{m} (t) \left( \left\| \tilde{\boldsymbol{g}}_m \left( \boldsymbol{\theta}_t \right) \right\|_2^2 - (s-3) {\mu}_{m} (t)^2 + 1 \right),}}
\end{align}
which is chosen to be equal to $P_t$, such that $\frac{1}{T} \sum\nolimits_{t=1}^{T} P_t \le \bar{P}$. Thus, we have, for $m \in [M]$,
\begin{align}\label{ScaleFacCalcUPA2ZM}
\alpha_{m}^{az} (t) = \frac{P_t}{\left\| \tilde{\boldsymbol{g}}_m \left( \boldsymbol{\theta}_t \right) \right\|_2^2 - (s-3) {\mu}_{m}^2 (t) + 1}. 
\end{align}
Compared to the transmit power given in \eqref{ScaleFacCalcUPA1}, we observe that removing the mean reduces the transmit power by $\alpha_{m}^{az} (t) (s-3) {\mu}_{m}^2 (t)$, $m \in [M]$. The received vector at the PS is given by
\begin{align}\label{ReceivedVectorPSUPAZM}
&\boldsymbol{y} \left(\boldsymbol{\theta}_{t} \right) = \begin{bmatrix}
     \sum\nolimits_{m=1}^{M} \sqrt{{\alpha}_{m}^{az} (t)}  \tilde{\boldsymbol{g}}_m^{az} \left( \boldsymbol{\theta}_t \right) \\
     \sum\nolimits_{m=1}^{M} \sqrt{\alpha_{m}^{az} (t)} {\mu}_{m} (t) \\
     \sum\nolimits_{m=1}^{M} \sqrt{\alpha_{m}^{az} (t)}
\end{bmatrix} + \boldsymbol{z} (t) \nonumber\\
& = \begin{bmatrix}
     \sum\nolimits_{m=1}^{M} \sqrt{{\alpha}_{m}^{az} (t)} \left( \boldsymbol{A}_{s-2} {\boldsymbol{g}}_m^{sp} \left( \boldsymbol{\theta}_t \right) - \mu_{m} (t) \boldsymbol{1}_{s-2} \right) \\
    \sum\nolimits_{m=1}^{M} \sqrt{\alpha_{m}^{az} (t)} {\mu}_{m} (t) \\
     \sum\nolimits_{m=1}^{M} \sqrt{\alpha_{m}^{az} (t)}
\end{bmatrix} + \boldsymbol{z} (t),
\end{align}
where we have
\begin{subequations}
\label{ReceivedVectorPSMR_UPADecomposed}
\begin{align}\label{ReceivedVectorPSMR_UPADecomposeds_2}
\boldsymbol{y}^{s-2} \left( t \right) & = \boldsymbol{A}_{s-2} \sum\nolimits_{m=1}^{M} \sqrt{{\alpha}_{m}^{az} (t)} \boldsymbol{g}_m^{sp} \left(\boldsymbol{\theta}_t \right) \nonumber\\
& \;\; \; - \sum\nolimits_{m=1}^{M} \sqrt{{\alpha}_{m}^{az} (t)} \mu_{m} (t) \boldsymbol{1}_{s-2} + \boldsymbol{z}^{s-2} (t), \\
y_{s-1} \left( t \right) &=  \sum\nolimits_{m=1}^{M} \sqrt{\alpha_{m}^{az} (t)} {\mu}_{m} (t) + z_{s-1} (t),  
\label{ReceivedVectorPSMR_UPADecomposeds_1}\\
y_{s} \left( t \right) &= \sum\nolimits_{m=1}^{M} \sqrt{\alpha_{m}^{az} (t)} + z_{s} (t).  
\label{ReceivedVectorPSMR_UPADecomposeds}
\end{align}
\end{subequations}
The PS performs AMP to recover $\frac{1}{M}\sum\nolimits_{m=1}^{M} \boldsymbol{g}_m^{sp} \left(\boldsymbol{\theta}_{t} \right)$ from the following vector
\begin{align}\label{ReceivedVectorPSGenDeScaleUPAZM}
& \frac{1}{y_{s} \left( t \right)} \left( \boldsymbol{y}^{s-2} \left( t \right) + y_{s-1} \left( t \right) \boldsymbol{1}_{s-2} \right) = \nonumber\\
&\qquad \qquad   \boldsymbol{A}_{s-2} \sum\nolimits_{m=1}^{M} \frac{\sqrt{\alpha_{m}^{az} (t)}}{\sum\nolimits_{i=1}^{M} \sqrt{\alpha_{i}^{az} (t)} + z_{s} (t)} \boldsymbol{g}_m^{sp} \left(\boldsymbol{\theta}_t \right) \nonumber\\
&\qquad \qquad   + \frac{z_{s-1} (t) \boldsymbol{1}_{s-2} + \boldsymbol{z}^{s-2} (t)}{\sum\nolimits_{i=1}^{M} \sqrt{\alpha_{i}^{az} (t)} + z_{s} (t)}. 
\end{align}


\section{Convergence Analysis of A-DSGD Algorithm}\label{SecConvergence}
In this section we provide convergence analysis of A-DSGD presented in Algorithm \ref{ModelUpdateAlg}. For simplicity, we assume that $\eta_t = \eta$, $\forall t$. We consider a differentiable and $c$-strongly convex loss function $F$, i.e., $\forall \boldsymbol{x}, \boldsymbol{y} \in \mathbb{R}^d$, it satisfies 
\begin{align}\label{c_StrongConvexFun}
F(\boldsymbol{y}) - F(\boldsymbol{x}) \ge \nabla F \left(\boldsymbol{x} \right)^{T} (\boldsymbol{y} - \boldsymbol{x}) + \frac{c}{2} \left\| \boldsymbol{y} - \boldsymbol{x} \right\|^2.       
\end{align}

\subsection{Preliminaries}\label{ConvergPre}

We first present some preliminaries and background for the essential instruments of the convergence analysis. 

\begin{assumption}\label{AssumpVarianceGrads}
The average of the first moment of the gradients at different devices is bounded as follows:
\begin{align}\label{ThetaBarModelUpdateDevicem}
\frac{1}{M} \sum\nolimits_{m=1}^{M} \mathbb{E}\left[ \left\| {\boldsymbol{g}}_m \left( \boldsymbol{\theta}_t \right) \right\| \right] \le G, \quad \forall t,
\end{align}
where the expectation is over the randomness of the gradients.
\end{assumption}

\begin{assumption}\label{AssumAlpham}
For scaling factors $\sqrt{\alpha_{m}(t)}$, $\forall m \in [M]$, given in \eqref{ScaleFacCalcUPA2}, we approximate
$\frac{ \sqrt{\alpha_{m} (t)}}{\sum\nolimits_{i=1}^{M} \sqrt{\alpha_{i} (t)} + z_{s} (t)}  \approx \frac{1}{M}$, $\forall t$.
\end{assumption}
The rationale behind Assumption \ref{AssumAlpham} is assuming that the noise term is small compared to the sum of the scaling factors, $\sum\nolimits_{i=1}^{M} \sqrt{\alpha_{i} (t)}$, and the $l_2$ norm of the gradient vectors at different devices are not highly skewed.      
Accordingly, the model parameter vector in Algorithm \ref{ModelUpdateAlg} is updated as follows:
\begin{align}\label{ModelUpdateAfterAssump}
\boldsymbol{\theta}_{t+1} = \boldsymbol{\theta}_t - \eta \cdot {\rm{AMP}}_{\boldsymbol{A}_{s-1}} & \bigg(  \boldsymbol{A}_{s-1} \sum\nolimits_{m=1}^{M} \frac{ 1}{M} \boldsymbol{g}_m^{sp} \left(\boldsymbol{\theta}_t \right) \bigg.\nonumber \\
& \bigg.+ \frac{1}{\sum\nolimits_{i=1}^{M} \sqrt{\alpha_{i} (t)}} \boldsymbol{z}^{s-1} (t) \bigg).   
\end{align}

\begin{lemma}\label{LemAMPRecons}\cite{AMPBayatiMontIT,RamjiAMPIT_1,RamjiAMPIT_2} 
Consider reconstructing vector $\boldsymbol{x} \in \mathbb{R}^d$ with sparsity level $k$ and $\left\| \boldsymbol{x} \right\|^2 = P$ from a noisy observation $\boldsymbol{y} = \boldsymbol{A}_s \boldsymbol{x} + \boldsymbol{z}$, where $\boldsymbol{A}_s \in \mathbb{R}^{s \times d}$ is the measurement matrix, and $\boldsymbol{z} \in \mathbb{R}^s$ is the noise vector with each entry i.i.d. with $\mathcal{N} (0, \sigma^2)$. If $s > k$, the AMP algorithm reconstructs 
\begin{align}\label{ReconAMPLem}
\hat{\boldsymbol{x}} \triangleq {\rm{AMP}}_{\boldsymbol{A}_{s}} ({\boldsymbol{y}}) = \boldsymbol{x} + \sigma_{\tau} \boldsymbol{\omega},
\end{align}
where each entry of $\boldsymbol{\omega} \in \mathbb{R}^d$ is i.i.d. with $\mathcal{N}(0,1)$, and $\sigma^2_{\tau}$ decreases monotonically from $\sigma^2 + P$ to $\sigma^2$. That is, the noisy observation $\boldsymbol{y}$ is effectively transformed into $\hat{\boldsymbol{x}} = \boldsymbol{x} + \sigma \boldsymbol{\omega}$.    
\end{lemma}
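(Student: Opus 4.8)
The plan is to obtain this statement from the rigorous state-evolution analysis of AMP developed in \cite{AMPBayatiMontIT} and sharpened in \cite{RamjiAMPIT_1,RamjiAMPIT_2}; the proof is essentially a specialization of that theory to a $k$-sparse signal observed through a Gaussian measurement matrix and corrupted by additive Gaussian noise. First I would recall the AMP iteration: initialized with $\hat{\boldsymbol{x}}^{0} = \boldsymbol{0}$ and residual $\boldsymbol{r}^{0} = \boldsymbol{y}$, it alternates a denoising step $\hat{\boldsymbol{x}}^{\ell+1} = \psi_{\ell}\!\left( \hat{\boldsymbol{x}}^{\ell} + \boldsymbol{A}_{s}^{T} \boldsymbol{r}^{\ell} \right)$ and a residual update carrying the Onsager correction term, where $\psi_{\ell}$ is the componentwise soft-thresholding map tuned to the sparsity level $k$ and $\delta \triangleq s/d$ is the undersampling ratio. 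The core result of \cite{AMPBayatiMontIT} is that, with $\boldsymbol{A}_{s}$ having i.i.d. zero-mean entries of variance $1/s$ and in the large-system limit, the effective input to the denoiser at iteration $\ell$ is statistically indistinguishable from $\boldsymbol{x} + \tau_{\ell}\boldsymbol{\omega}$ with $\boldsymbol{\omega} \sim \mathcal{N}(\boldsymbol{0}, \boldsymbol{I}_{d})$; identifying $\sigma_{\tau} = \tau_{\ell}$ then gives exactly the decoupled form \eqref{ReconAMPLem}.

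Next I would analyze the scalar state-evolution recursion that governs $\tau_{\ell}$, namely $\tau_{\ell+1}^{2} = \sigma^{2} + \frac{1}{\delta}\, \mathbb{E}\!\left[ \left( \psi_{\ell}(X + \tau_{\ell} Z) - X \right)^{2} \right]$, with $Z \sim \mathcal{N}(0,1)$ and $X$ distributed as a generic entry of $\boldsymbol{x}$. Evaluating it at the initialization $\hat{\boldsymbol{x}}^{0} = \boldsymbol{0}$, the denoising-error term reduces to the (normalized) signal energy, giving $\tau_{0}^{2} = \sigma^{2} + P$, which is the claimed starting value. I would then show $\left\{ \tau_{\ell}^{2} \right\}$ is non-increasing: the soft-threshold MSE $\tau \mapsto \mathbb{E}[(\psi_{\ell}(X+\tau Z)-X)^{2}]$, with the threshold chosen for $k$-sparse inputs, is a non-decreasing function of $\tau^{2}$, and the resulting one-dimensional map $\tau^{2} \mapsto \sigma^{2} + \frac{1}{\delta}\mathbb{E}[(\psi(X+\tau Z)-X)^{2}]$ lies strictly below the identity for $\tau^{2} > \sigma^{2}$ whenever $s > k$; a standard monotone-convergence argument started from $\tau_{0}^{2}$ then forces $\tau_{\ell}^{2} \downarrow \tau_{\infty}^{2}$. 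Finally I would identify $\tau_{\infty}^{2} = \sigma^{2}$: for $s > k$ the sparsity-induced distortion is driven out, so the least fixed point of the recursion is the channel-noise floor, which yields $\hat{\boldsymbol{x}} = \boldsymbol{x} + \sigma\boldsymbol{\omega}$ and completes the statement.

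The main obstacle is the rigorous passage from the finite-dimensional iteration to the clean decoupled representation $\hat{\boldsymbol{x}} = \boldsymbol{x} + \sigma_{\tau}\boldsymbol{\omega}$: this requires the asymptotic regime $d \to \infty$ with $s/d$ fixed, (pseudo-)Lipschitz regularity of the denoisers so that the state-evolution prediction is provably tracked, and concentration of the relevant random-matrix functionals, all of which are established in \cite{AMPBayatiMontIT,RamjiAMPIT_1,RamjiAMPIT_2}, so in the proof I would invoke these as black boxes rather than reprove them. A secondary subtlety is the exactness of the fixed point $\tau_{\infty}^{2} = \sigma^{2}$ under the bare hypothesis $s > k$: in full generality this is the idealized behaviour at or above the phase-transition/noise-sensitivity threshold, so I would either state the conclusion under the same asymptotic and threshold conventions used in the cited works, or, if the downstream convergence analysis can tolerate it, weaken ``$\sigma^{2}$'' to ``a value that can be made arbitrarily close to $\sigma^{2}$.''
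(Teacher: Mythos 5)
The paper does not actually prove this lemma: it is imported verbatim from the cited AMP literature \cite{AMPBayatiMontIT,RamjiAMPIT_1,RamjiAMPIT_2} and used as a black box in the convergence analysis, so there is no in-paper argument to compare yours against. That said, your reconstruction is the standard and correct route to the statement: the decoupling $\hat{\boldsymbol{x}} = \boldsymbol{x} + \tau_\ell \boldsymbol{\omega}$ is exactly the state-evolution characterization of Bayati--Montanari, the initialization $\tau_0^2 = \sigma^2 + P$ follows from evaluating the recursion at $\hat{\boldsymbol{x}}^0 = \boldsymbol{0}$, and the monotone decrease follows from the monotonicity of the denoiser MSE in $\tau^2$. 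You have also put your finger on the one genuine weak point of the lemma as stated: the terminal value $\tau_\infty^2 = \sigma^2$ under the bare hypothesis $s > k$ is an idealization --- the soft-thresholding fixed point generically sits strictly above the noise floor unless one is comfortably above the phase-transition boundary (or lets $k/s \to 0$), and the rigorous statements in the cited works hold in the large-system limit $d \to \infty$ with $s/d$ fixed and for pseudo-Lipschitz test functions. Your proposed hedge (state the result under the asymptotic/threshold conventions of the cited works, or weaken $\sigma^2$ to a value approaching $\sigma^2$) is the honest fix; the paper instead simply adopts the idealized form, consistent with Assumption 3, which likewise forces the sparsity of the aggregate signal below the measurement dimension so that the favorable regime applies.
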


\begin{assumption}\label{AssumpSparsePattern}
We assume that the sparsity pattern of vector $\sum\nolimits_{m=1}^{M} \boldsymbol{g}_m^{sp} \left(\boldsymbol{\theta}_t \right)$ is smaller than $s-1$, $\forall t$.  
\end{assumption}
If all the sparse vectors $\boldsymbol{g}_m^{sp} \left(\boldsymbol{\theta}_t \right)$, $\forall m \in [M]$, have the same sparsity pattern, Assumption \ref{AssumpSparsePattern} holds trivially since we set $k < s-1$. On the other hand, in general we can guarantee that Assumption \ref{AssumpSparsePattern} holds by setting $k \ll s$.

According to Lemma \ref{LemAMPRecons} and Assumption \ref{AssumpSparsePattern}, the model parameter update given in \eqref{ModelUpdateAfterAssump} can be rewritten as
\begin{align}\label{ModelUpdateAfterAssumpLem}
\boldsymbol{\theta}_{t+1} = \boldsymbol{\theta}_t - \eta & \left(\frac{1}{M} \sum\nolimits_{m=1}^{M} \boldsymbol{g}_m^{sp} \left(\boldsymbol{\theta}_t \right) + \sigma_{\omega} (t) \boldsymbol{w} (t) \right),   
\end{align}
where we define
\begin{align}\label{SigmaWBasicDef}
\sigma_{\omega} (t) \triangleq \frac{\sigma}{\sum\nolimits_{m=1}^{M} \sqrt{\alpha_{m} (t)}},  
\end{align}
and each entry of $\boldsymbol{\omega}(t) \in \mathbb{R}^d$ is i.i.d. with $\mathcal{N}(0,1)$.

\begin{corollary}\label{CorSparsek}
For vector $\boldsymbol{x} \in \mathbb{R}^d$, it is easy to verify that 
$\left\| \boldsymbol{x} - {\rm{sp}}_k (\boldsymbol{x}) \right\| \le \lambda \left\|  \boldsymbol{x} \right\|$,
where $\lambda \triangleq \sqrt{\frac{d-k}{d}}$, and the equality holds if all the entries of $\boldsymbol{x}$ have the same magnitude. 
\end{corollary}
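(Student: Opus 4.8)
The plan is to reduce the claim to an elementary inequality about the $d-k$ smallest squared entries of $\boldsymbol{x}$. By definition, ${\rm{sp}}_k (\boldsymbol{x})$ keeps the $k$ coordinates of $\boldsymbol{x}$ with the largest magnitudes and sets the remaining ones to zero, so $\boldsymbol{x} - {\rm{sp}}_k (\boldsymbol{x})$ is supported exactly on the $d-k$ coordinates of smallest magnitude. Consequently $\left\| \boldsymbol{x} - {\rm{sp}}_k (\boldsymbol{x}) \right\|^2$ equals the sum of the $d-k$ smallest elements of the multiset $\{ x_1^2, \dots, x_d^2 \}$. Writing the squared entries in nondecreasing order as $a_1 \le a_2 \le \cdots \le a_d$, it therefore suffices to prove that $\sum\nolimits_{i=1}^{d-k} a_i \le \frac{d-k}{d} \sum\nolimits_{i=1}^{d} a_i$ and then take square roots.

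To establish this inequality I would compare the average of the $d-k$ smallest terms with the average of the $k$ largest terms. Since $a_i \le a_j$ whenever $i \le d-k < j$, we have $\frac{1}{d-k}\sum\nolimits_{i=1}^{d-k} a_i \le \frac{1}{k}\sum\nolimits_{i=d-k+1}^{d} a_i$. The global average $\frac{1}{d}\sum\nolimits_{i=1}^{d} a_i$ is a convex combination of these two group averages, hence lies between them; in particular $\frac{1}{d-k}\sum\nolimits_{i=1}^{d-k} a_i \le \frac{1}{d}\sum\nolimits_{i=1}^{d} a_i$. Multiplying by $d-k$ gives $\left\| \boldsymbol{x} - {\rm{sp}}_k (\boldsymbol{x}) \right\|^2 \le \frac{d-k}{d}\left\| \boldsymbol{x} \right\|^2$, and taking square roots together with $\lambda = \sqrt{(d-k)/d}$ yields the claim.

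For the equality case, observe that if every entry of $\boldsymbol{x}$ has the same magnitude then all $a_i$ equal a common value $a$, so $\sum\nolimits_{i=1}^{d-k} a_i = (d-k)a = \frac{d-k}{d}(d a) = \frac{d-k}{d}\sum\nolimits_{i=1}^{d} a_i$ and every step above is tight. Conversely, tightness forces all the group-average comparisons to be equalities, which forces $a_1 = \cdots = a_d$, i.e. all entries of $\boldsymbol{x}$ have equal magnitude.

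I do not anticipate a genuine obstacle: the statement is elementary and the only point requiring a little care is the treatment of ties in magnitude when defining ${\rm{sp}}_k$. Any consistent tie-breaking rule still removes $d-k$ coordinates whose squared values are the $d-k$ smallest (counted with multiplicity), so $\left\| \boldsymbol{x} - {\rm{sp}}_k (\boldsymbol{x}) \right\|^2$ is unchanged and the argument goes through verbatim.
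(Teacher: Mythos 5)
Your proof is correct. The paper states this corollary without any proof (it is asserted as "easy to verify"), and your argument --- observing that $\left\| \boldsymbol{x} - {\rm{sp}}_k(\boldsymbol{x}) \right\|^2$ is the sum of the $d-k$ smallest squared entries, which is at most the fraction $\frac{d-k}{d}$ of $\left\| \boldsymbol{x} \right\|^2$ because the average of the smallest group cannot exceed the global average --- is precisely the standard verification the authors have in mind, with the tie-breaking remark and the equality analysis being careful bonuses.
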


\begin{lemma}\label{LemL2NormGaussianVec}
Consider a random vector $\boldsymbol{u} \in \mathbb{R}^d$ with each entry i.i.d. according to $\mathcal{N} (0, \sigma_u^2)$. For $\delta \in (0,1)$, we have ${\rm{Pr}} \left\{ \left\| \boldsymbol{u} \right\| \ge \sigma_u \rho (\delta) \right\} = \delta$,
where we define $\rho ({\delta}) \triangleq \left( 2 \gamma^{-1} \left( \Gamma(d/2) (1-\delta),d/2 \right) \right)^{1/2}$, and $\gamma \left( a, x \right)$ is the lower incomplete gamma function defined as:
\begin{align}\label{lower_incomplete_gamma_function}
\gamma \left( a,x \right) \triangleq \int_0^x {{\iota ^{a - 1}}{e^{ - \iota }}d\iota }, \quad x \ge 0, a > 0,   
\end{align}
and $\gamma^{-1} \left(a,y  \right)$ is its inverse, which returns the value of $x$, such that $\gamma \left( a,x \right) = y$, and $\Gamma \left( a \right)$ is the gamma function defined as
\begin{align}\label{gamma_function}
\Gamma \left( a\right) \triangleq \int_0^{\infty} {{\iota ^{a - 1}}{e^{ - \iota }}d\iota }, \quad a > 0.   
\end{align}
\end{lemma}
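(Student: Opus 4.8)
The plan is to recognize $\left\| \boldsymbol{u} \right\|^2/\sigma_u^2$ as a chi-squared random variable and read off the stated probability directly from its cumulative distribution function. First I would normalize: setting $v_i \triangleq u_i/\sigma_u$ for $i \in [d]$, the entries $v_i$ are i.i.d. $\mathcal{N}(0,1)$, and $\left\| \boldsymbol{u} \right\|^2 = \sigma_u^2 \sum_{i=1}^d v_i^2$. Hence the event $\left\{ \left\| \boldsymbol{u} \right\| \ge \sigma_u \rho(\delta) \right\}$ is exactly the event $\left\{ S \ge \rho(\delta)^2 \right\}$, where $S \triangleq \sum_{i=1}^d v_i^2$ follows, by definition, a chi-squared distribution with $d$ degrees of freedom, so it remains to evaluate its tail.

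Next I would use (or derive from the $\chi^2_d$ density $f_S(x) = \frac{1}{2^{d/2}\Gamma(d/2)} x^{d/2-1} e^{-x/2}$, $x \ge 0$, via the substitution $\iota = x/2$ in \eqref{lower_incomplete_gamma_function}) the identity
\begin{align}
{\rm{Pr}}\left\{ S \le x \right\} = \frac{\gamma \left( d/2, \, x/2 \right)}{\Gamma \left( d/2 \right)}, \qquad x \ge 0.
\end{align}
Since $\gamma(d/2,\cdot)$ is continuous and strictly increasing on $[0,\infty)$ from $0$ to $\Gamma(d/2)$, the map $x \mapsto {\rm{Pr}}\left\{ S \le x \right\}$ is a continuous bijection onto $[0,1)$; in particular its inverse is well-defined, which is why the statement is an exact equality rather than merely a bound.

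Finally I would substitute the definition of $\rho(\delta)$. Per the statement, $\rho(\delta)^2/2 = \gamma^{-1}\!\left( \Gamma(d/2)(1-\delta), \, d/2 \right)$, i.e.\ $\rho(\delta)^2/2$ is the unique value $x$ with $\gamma(d/2, x) = \Gamma(d/2)(1-\delta)$. Plugging this into the CDF yields
\begin{align}
{\rm{Pr}}\left\{ \left\| \boldsymbol{u} \right\| \ge \sigma_u \rho(\delta) \right\} = 1 - {\rm{Pr}}\left\{ S \le \rho(\delta)^2 \right\} = 1 - \frac{\gamma \left( d/2, \, \rho(\delta)^2/2 \right)}{\Gamma(d/2)} = 1 - \frac{\Gamma(d/2)(1-\delta)}{\Gamma(d/2)} = \delta,
\end{align}
which is the claim. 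There is no substantive obstacle here; the only points that need care are the identification of $\sum_i v_i^2$ with $\chi^2_d$ together with its CDF in incomplete-gamma form, and matching the argument ordering used in the paper's definition of $\gamma^{-1}$ so that the substitution in the last display is the intended one.
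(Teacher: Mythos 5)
Your proposal is correct and follows essentially the same route as the paper's proof: identify $\left\| \boldsymbol{u} \right\|^2$ (after normalization) as chi-squared, write its CDF via the lower incomplete gamma function, and substitute the definition of $\rho(\delta)$ to obtain the exact tail probability $\delta$. The only difference is that you spell out the normalization and the monotonicity of $\gamma(d/2,\cdot)$ explicitly, which the paper leaves implicit.
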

\begin{proof}
We highlight that $\left\| \boldsymbol{u} \right\|^2$ follows the Chi-square distribution, and we have \cite{BookTheoryOfStatistics}
\begin{align}\label{AppChiSquare}
{\rm{Pr}} \left\{ \left\| \boldsymbol{u} \right\|^2 \le u \right\} = \frac{\gamma(d/2, u/(2\sigma_u^2))}{\Gamma(d/2)}.     
\end{align}
Accordingly, it follows that 
\begin{align}\label{AppChiSquare2}
&{\rm{Pr}} \left\{ \left\| \boldsymbol{u} \right\| \ge \sigma_u \left( 2 \gamma^{-1} \left( \Gamma(d/2) (1-\delta),d/2 \right) \right)^{1/2} \right\} \nonumber\\
& \;\;\;={\rm{Pr}} \left\{ \left\| \boldsymbol{u} \right\|^2 \ge 2 \sigma_u^2  \gamma^{-1} \left( \Gamma(d/2) (1-\delta),d/2 \right)  \right\} = \delta.     
\end{align} \end{proof}

By choosing $\delta$ appropriately, we can guarantee, with probability arbitrary close to $1$, that $\left\| \boldsymbol{u} \right\| \le \sigma_u \rho ({\delta})$.

\begin{lemma}\label{LemVarianceSigmaOmega}
Having $\sigma_{\omega} (t)$ defined as in \eqref{SigmaWBasicDef}, by taking the expectation with respect to the gradients, we have, $\forall t$,
\begin{align}\label{EQLemVarianceSigmaOmega}
\mathbb{E} \left[ \sigma_{\omega} (t) \right] \le \frac{\sigma}{M \sqrt{P_t}} \left( \sigma_{{\rm{max}}} \bigg( \frac{1-\lambda^{t+1}}{1-\lambda} \bigg) G + 1 \right),    
\end{align}
where $\sigma_{{\rm{max}}} \triangleq \sqrt{\frac{d}{s-1}} +1$.
\end{lemma}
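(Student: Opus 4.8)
The plan is to strip $\sigma_\omega(t)$ down to a pointwise bound involving only gradient norms, and postpone taking expectations to the very last step. First I would substitute the explicit scaling factors \eqref{ScaleFacCalcUPA2} into the definition \eqref{SigmaWBasicDef}, obtaining
\[
\sigma_\omega(t) = \frac{\sigma}{\sqrt{P_t}\,\sum_{m=1}^{M}\big(\|\tilde{\boldsymbol{g}}_m(\boldsymbol{\theta}_t)\|_2^2+1\big)^{-1/2}}.
\]
To handle the sum of reciprocals I would invoke the AM--HM inequality on the positive numbers $a_m \triangleq \big(\|\tilde{\boldsymbol{g}}_m(\boldsymbol{\theta}_t)\|_2^2+1\big)^{1/2}$, which gives $\sum_m a_m^{-1}\ge M^2/\sum_m a_m$ and hence, together with $\sqrt{u+v}\le\sqrt{u}+\sqrt{v}$,
\[
\sigma_\omega(t)\le \frac{\sigma}{M^2\sqrt{P_t}}\sum_{m=1}^{M}\sqrt{\|\tilde{\boldsymbol{g}}_m(\boldsymbol{\theta}_t)\|_2^2+1}\le\frac{\sigma}{M^2\sqrt{P_t}}\sum_{m=1}^{M}\big(\|\tilde{\boldsymbol{g}}_m(\boldsymbol{\theta}_t)\|+1\big).
\]
This reduces the whole problem to bounding the norms $\|\tilde{\boldsymbol{g}}_m(\boldsymbol{\theta}_t)\|$.

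The second step is to propagate the norm bound back through the operations that define $\tilde{\boldsymbol{g}}_m$. Since $\tilde{\boldsymbol{g}}_m(\boldsymbol{\theta}_t)=\boldsymbol{A}_{s-1}\boldsymbol{g}_m^{sp}(\boldsymbol{\theta}_t)$, submultiplicativity of the $l_2$-induced norm gives $\|\tilde{\boldsymbol{g}}_m(\boldsymbol{\theta}_t)\|\le\|\boldsymbol{A}_{s-1}\|\,\|\boldsymbol{g}_m^{sp}(\boldsymbol{\theta}_t)\|$; I would then use the standard bound on the largest singular value of a Gaussian matrix, $\|\boldsymbol{A}_{s-1}\|\le\sqrt{d/(s-1)}+1=\sigma_{{\rm{max}}}$, and the fact that $k$-sparsification only deletes coordinates, so $\|\boldsymbol{g}_m^{sp}(\boldsymbol{\theta}_t)\|\le\|\boldsymbol{g}_m^{ec}(\boldsymbol{\theta}_t)\|$. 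It remains to control $\|\boldsymbol{g}_m^{ec}(\boldsymbol{\theta}_t)\|=\|\boldsymbol{g}_m(\boldsymbol{\theta}_t)+\boldsymbol{\Delta}_m(t)\|\le\|\boldsymbol{g}_m(\boldsymbol{\theta}_t)\|+\|\boldsymbol{\Delta}_m(t)\|$, for which I would bound the accumulated error recursively: by \eqref{AccumErrorUpdateItet} and Corollary~\ref{CorSparsek}, $\|\boldsymbol{\Delta}_m(t+1)\|=\|\boldsymbol{g}_m^{ec}(\boldsymbol{\theta}_t)-{\rm{sp}}_k(\boldsymbol{g}_m^{ec}(\boldsymbol{\theta}_t))\|\le\lambda\|\boldsymbol{g}_m^{ec}(\boldsymbol{\theta}_t)\|\le\lambda\big(\|\boldsymbol{g}_m(\boldsymbol{\theta}_t)\|+\|\boldsymbol{\Delta}_m(t)\|\big)$. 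Unrolling this linear recursion from $\boldsymbol{\Delta}_m(0)=\boldsymbol{0}$ yields $\|\boldsymbol{\Delta}_m(t)\|\le\sum_{i=1}^{t}\lambda^i\|\boldsymbol{g}_m(\boldsymbol{\theta}_{t-i})\|$, and therefore $\|\boldsymbol{g}_m^{ec}(\boldsymbol{\theta}_t)\|\le\sum_{i=0}^{t}\lambda^i\|\boldsymbol{g}_m(\boldsymbol{\theta}_{t-i})\|$.

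Finally I would take expectations with respect to the gradient randomness. Averaging $\|\tilde{\boldsymbol{g}}_m(\boldsymbol{\theta}_t)\|\le\sigma_{{\rm{max}}}\sum_{i=0}^{t}\lambda^i\|\boldsymbol{g}_m(\boldsymbol{\theta}_{t-i})\|$ over $m$ and applying Assumption~\ref{AssumpVarianceGrads} to each iteration index $t-i$ gives $\frac1M\sum_{m=1}^M\mathbb{E}\big[\|\tilde{\boldsymbol{g}}_m(\boldsymbol{\theta}_t)\|\big]\le\sigma_{{\rm{max}}}G\sum_{i=0}^{t}\lambda^i=\sigma_{{\rm{max}}}G\frac{1-\lambda^{t+1}}{1-\lambda}$, the last equality being the finite geometric sum. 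Substituting into the pointwise bound for $\sigma_\omega(t)$ from the first step and writing $\sum_{m=1}^M(\cdot)=M\cdot\frac1M\sum_{m=1}^M(\cdot)$, so that one factor of $M$ cancels, yields exactly \eqref{EQLemVarianceSigmaOmega}. I expect the only genuinely delicate point to be the handling of the random projection matrix: the inequality $\|\boldsymbol{A}_{s-1}\|\le\sigma_{{\rm{max}}}$ is a high-probability (or expectation, via Gordon's inequality) statement rather than a deterministic identity, and since $\boldsymbol{g}_m^{sp}(\boldsymbol{\theta}_t)$ itself depends on $\boldsymbol{A}_{s-1}$ through the previous iterates, one must either condition on the event $\{\|\boldsymbol{A}_{s-1}\|\le\sigma_{{\rm{max}}}\}$ or fold it into the expectation; everything else is elementary norm algebra, one application of AM--HM, and summing a geometric series.
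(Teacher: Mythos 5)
Your proposal is correct and follows essentially the same route as the paper's proof: the same AM--HM inequality (the paper proves it as inequality \eqref{ArithmaticMeanInverseMean} in Appendix B), the same bound $\|\boldsymbol{A}_{s-1}\|\le\sigma_{\rm{max}}$ via the asymptotic largest singular value, the same unrolling of the error-accumulation recursion through Corollary \ref{CorSparsek}, and the same final expectation step using Assumption \ref{AssumpVarianceGrads} and the geometric sum. Your closing caveat about $\|\boldsymbol{A}_{s-1}\|\le\sigma_{\rm{max}}$ being an asymptotic/high-probability estimate rather than a deterministic bound is apt --- the paper also only ``estimates'' the induced norm by its Bai--Yin limit.
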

\begin{proof}
See Appendix \ref{AppLemVarianceSigmaOmega}.
\end{proof}

\begin{lemma}\label{LemExpectationDifferenceTheta}
Based on the results presented above, taking expectation with respect to the gradients yields
\begin{subequations}
\begin{align}\label{EQLemExpectationDifferenceTheta}
{\small{\mathbb{E} \left[ \left\| \eta \frac{1}{M} \sum\nolimits_{m=1}^{M} \left( \boldsymbol{g}_m \left( \boldsymbol{\theta}_t \right)  - \boldsymbol{g}_m^{sp} \left( \boldsymbol{\theta}_t \right) \right) - \sigma_{\omega}(t) \boldsymbol{w}(t) \right\| \right] \le \eta v(t),}}   
\end{align}
where we define, $\forall t \ge 0$,  
\begin{align}\label{DefExpectationDifferenceThetaMainSimplev_t}
v(t) \triangleq & \lambda \left(  \frac{(1+ \lambda)(1-\lambda^t)}{1-\lambda} +1 \right)G \nonumber\\
&+  \rho(\delta) \frac{\sigma}{M \sqrt{P_t}} \left( \sigma_{{\rm{max}}} \bigg( \frac{1-\lambda^{t+1}}{1-\lambda} \bigg) G + 1 \right),   
\end{align}
\end{subequations} 
\end{lemma}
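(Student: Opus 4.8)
The plan is to bound the left-hand side of \eqref{EQLemExpectationDifferenceTheta} by the triangle inequality, splitting it into (i) the error introduced by $k$-sparsification with error accumulation and (ii) the effective Gaussian noise left in the update after AMP reconstruction, and then to match each piece against the corresponding summand of $v(t)$. Concretely, I would first write
\[
\left\| \eta \tfrac{1}{M}\sum_{m=1}^{M}\big( \boldsymbol{g}_m(\boldsymbol{\theta}_t) - \boldsymbol{g}_m^{sp}(\boldsymbol{\theta}_t)\big) - \sigma_{\omega}(t)\boldsymbol{w}(t)\right\| \le \eta \left\| \tfrac{1}{M}\sum_{m=1}^{M}\big( \boldsymbol{g}_m(\boldsymbol{\theta}_t) - \boldsymbol{g}_m^{sp}(\boldsymbol{\theta}_t)\big)\right\| + \sigma_{\omega}(t)\,\|\boldsymbol{w}(t)\|,
\]
take expectations, and treat the two terms independently.

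For the sparsification term, the key algebraic observation is that, by the error-accumulation update \eqref{AccumErrorUpdateItet} together with $\boldsymbol{g}_m^{ec}(\boldsymbol{\theta}_t) = \boldsymbol{g}_m(\boldsymbol{\theta}_t) + \boldsymbol{\Delta}_m(t)$,
\[
\boldsymbol{g}_m(\boldsymbol{\theta}_t) - \boldsymbol{g}_m^{sp}(\boldsymbol{\theta}_t) = \big(\boldsymbol{g}_m^{ec}(\boldsymbol{\theta}_t) - \boldsymbol{g}_m^{sp}(\boldsymbol{\theta}_t)\big) - \boldsymbol{\Delta}_m(t) = \boldsymbol{\Delta}_m(t+1) - \boldsymbol{\Delta}_m(t),
\]
so that the averaged quantity is controlled by $\tfrac{1}{M}\sum_m \big(\|\boldsymbol{\Delta}_m(t+1)\| + \|\boldsymbol{\Delta}_m(t)\|\big)$. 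I would then bound $\|\boldsymbol{\Delta}_m(t)\|$ by a recursion: applying Corollary \ref{CorSparsek} to $\boldsymbol{g}_m^{ec}(\boldsymbol{\theta}_t)$ gives $\|\boldsymbol{\Delta}_m(t+1)\| = \|\boldsymbol{g}_m^{ec}(\boldsymbol{\theta}_t) - {\rm sp}_k(\boldsymbol{g}_m^{ec}(\boldsymbol{\theta}_t))\| \le \lambda\big(\|\boldsymbol{g}_m(\boldsymbol{\theta}_t)\| + \|\boldsymbol{\Delta}_m(t)\|\big)$. Averaging over $m$, taking expectations, and using Assumption \ref{AssumpVarianceGrads} yields the scalar recursion $D(t+1) \le \lambda\big(G + D(t)\big)$ for $D(t) \triangleq \tfrac{1}{M}\sum_m \mathbb{E}\|\boldsymbol{\Delta}_m(t)\|$ with $D(0)=0$; unrolling gives $D(t) \le \lambda\tfrac{1-\lambda^t}{1-\lambda}G$, and hence $D(t+1)+D(t) \le \lambda\big(1 + \tfrac{(1+\lambda)(1-\lambda^t)}{1-\lambda}\big)G$, which is exactly the first summand of $v(t)$ times $\eta$.

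For the channel-noise term, $\boldsymbol{w}(t)$ has i.i.d. $\mathcal{N}(0,1)$ entries by Lemma \ref{LemAMPRecons} (and Assumption \ref{AssumpSparsePattern}), and it is independent of the gradient-dependent scaling $\sigma_{\omega}(t)$. I would apply Lemma \ref{LemL2NormGaussianVec} with $\sigma_u=1$ so that $\|\boldsymbol{w}(t)\| \le \rho(\delta)$ holds on an event of probability $1-\delta$, and then invoke Lemma \ref{LemVarianceSigmaOmega} to get $\mathbb{E}[\sigma_{\omega}(t)] \le \tfrac{\sigma}{M\sqrt{P_t}}\big(\sigma_{\rm max}\tfrac{1-\lambda^{t+1}}{1-\lambda}G + 1\big)$; together these produce the second summand of $v(t)$. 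Adding the two bounds and recalling the definition of $v(t)$ finishes the proof.

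The main obstacle — really a matter of careful bookkeeping rather than a deep difficulty — is reconciling the high-probability control of $\|\boldsymbol{w}(t)\|$ (and, within Lemma \ref{LemVarianceSigmaOmega}, of the Gaussian-like norm $\|\boldsymbol{A}_{s-1}\boldsymbol{g}_m^{sp}(\boldsymbol{\theta}_t)\|$) with an honest expectation bound: Lemma \ref{LemL2NormGaussianVec} only guarantees the norm bound up to a failure probability $\delta$, so one must either argue that $\delta$ can be chosen small enough for the complementary event to be negligible, or restrict the whole convergence analysis to the corresponding high-probability event, and one must make sure the same $\delta$ is used consistently throughout so that $\rho(\delta)$ in $v(t)$ is well defined. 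Everything else is the routine telescoping and triangle-inequality estimates outlined above.
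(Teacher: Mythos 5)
Your proposal is correct and follows essentially the same route as the paper's proof: the same triangle-inequality split into the sparsification/error-accumulation term and the noise term, the same identity $\boldsymbol{g}_m(\boldsymbol{\theta}_t)-\boldsymbol{g}_m^{sp}(\boldsymbol{\theta}_t)=\boldsymbol{\Delta}_m(t+1)-\boldsymbol{\Delta}_m(t)$, the same geometric unrolling of $\|\boldsymbol{\Delta}_m(t)\|$ via Corollary \ref{CorSparsek} and Assumption \ref{AssumpVarianceGrads}, and the same use of Lemmas \ref{LemL2NormGaussianVec} and \ref{LemVarianceSigmaOmega} for the noise term. The bookkeeping issue you flag---mixing the high-probability bound $\|\boldsymbol{w}(t)\|\le\rho(\delta)$ with an expectation bound---is indeed present in the paper's own proof as well, so your account faithfully reproduces both the argument and its caveat.
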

\begin{proof}
See Appendix \ref{AppLemExpectationDifferenceTheta}. 
\end{proof}

\begin{figure*}[t!]
\centering
\begin{subfigure}{.5\textwidth}
  \centering
  \includegraphics[scale=0.635,trim={20pt 7pt 36pt 40pt},clip]{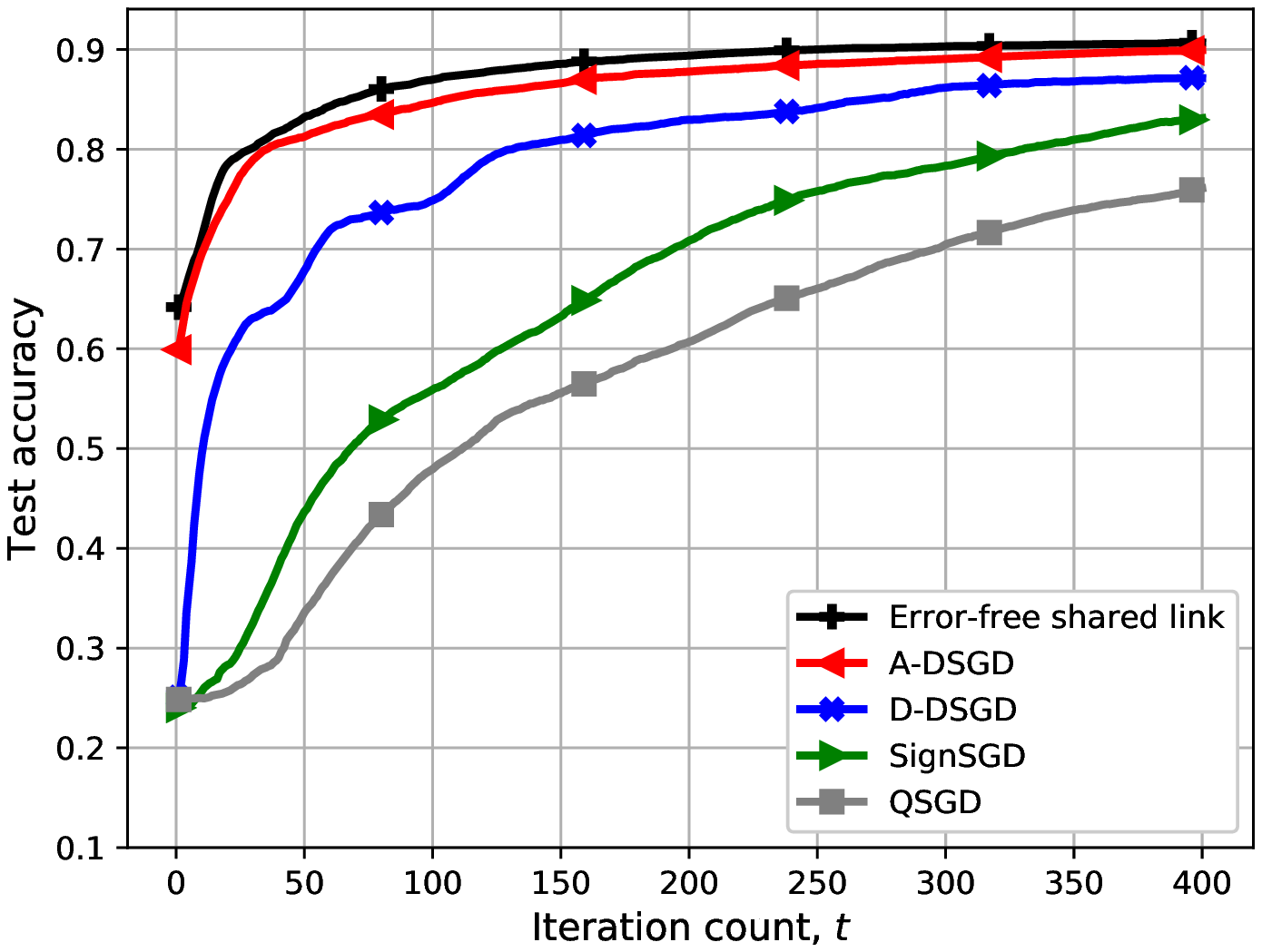}
  \caption{IID data distribution}
  \label{Fig_ADSGD_DDSGS_IID}
\end{subfigure}%
\begin{subfigure}{.5\textwidth}
  \centering
  \includegraphics[scale=0.635,trim={20pt 7pt 36pt 40pt},clip]{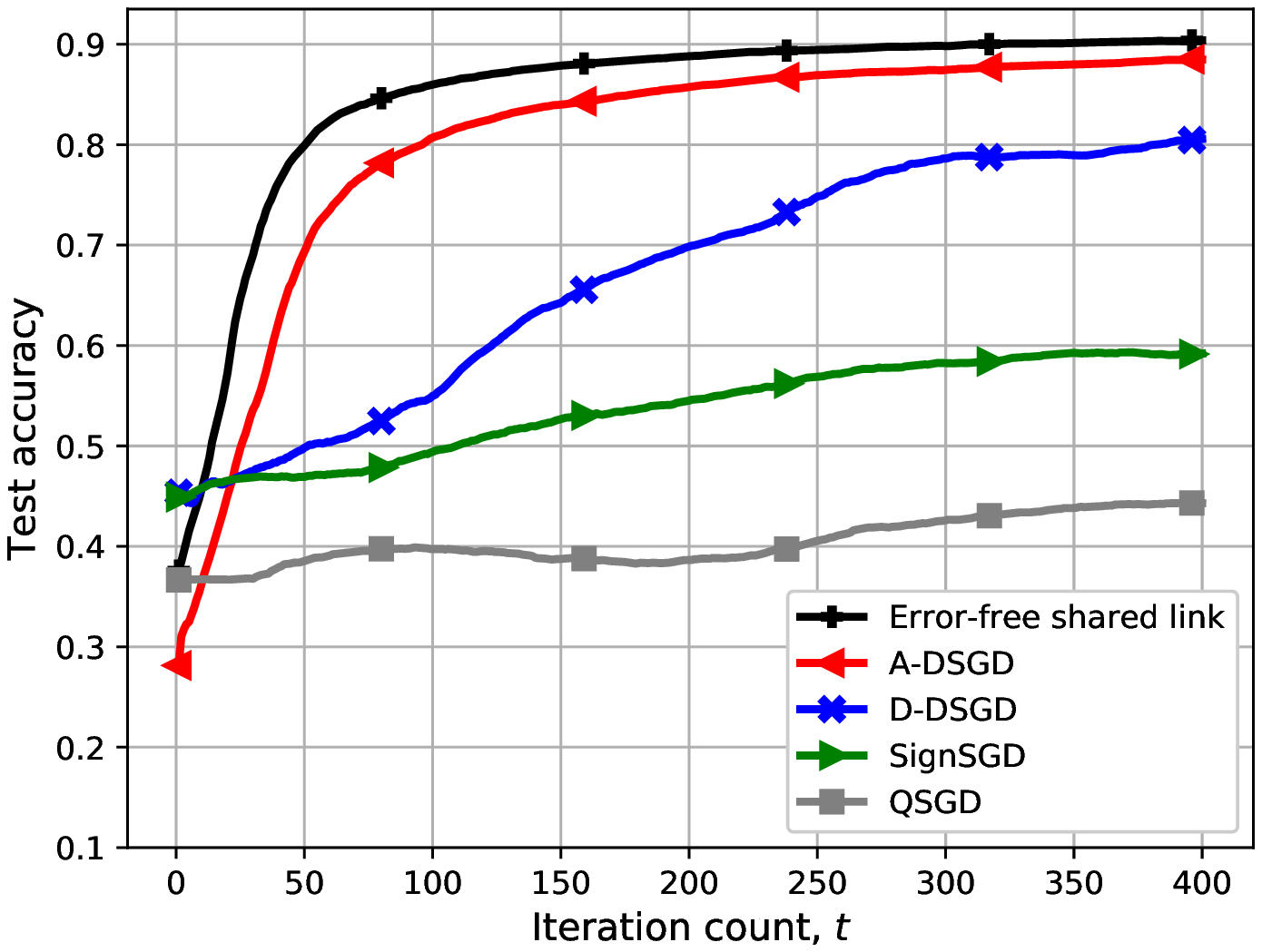}
  \caption{Non-IID data distribution}
  \label{Fig_ADSGD_DDSGS_NonIID}
\end{subfigure}
\caption{Test accuracy of different schemes for IID and non-IID data distribution scenarios with $M=25$, $B=1000$, $\bar{P} = 500$, $s=d/2$, $k=s/2$, and $P_t=\bar{P}$, $\forall t$.}
\label{Fig_ADSGD_DDSGS_IID_NonIID}
\end{figure*}

\subsection{Convergence Rate}\label{ConvergRate}
Let $\boldsymbol{\theta}^*$ be the optimum parameter vector minimizing the loss function $F$. For $\varepsilon > 0$, we define $\mathcal{S} \triangleq \{ \boldsymbol{\theta}: \left\| \boldsymbol{\theta} - \boldsymbol{\theta}^* \right\|^2 \le \varepsilon \}$ as the \textit{success region} to which the parameter vector should converge. We establish the convergence using rate supermartingale stochastic processes (refer to \cite{SaTamingWildNips15,SparseConvergenceAlistarh} for a detailed definition of the supermartingale and rate supermartingale processes).

\begin{definition}\cite{SaTamingWildNips15}
For any stochastic process, a non-negative process $W_t: \mathbb{R}^{d \times (t+1)} \to \mathbb{R}$ with horizon $B$ is a rate supermartingale if i) for any sequence $\boldsymbol{\theta}_t, \dots, \boldsymbol{\theta}_0 \in \mathbb{R}^d$ and any $t \le B$,
$\mathbb{E} \left[ W_{t+1} \left( \boldsymbol{\theta}_{t+1}, \dots, \boldsymbol{\theta}_0 \right) \right] \le W_{t} \left( \boldsymbol{\theta}_{t}, \dots, \boldsymbol{\theta}_0 \right)$.
ii) $\forall T \le B$ and any sequence $\boldsymbol{\theta}_T, \dots, \boldsymbol{\theta}_0 \in \mathbb{R}^{d}$, if the algorithm does not enter the success region by time $T$, i.e., if $\boldsymbol{\theta}_t \notin \mathcal{S}$, $\forall t \le T$, it must satisfy $W_{T} \left( \boldsymbol{\theta}_{T}, \dots, \boldsymbol{\theta}_0 \right) \ge T$.
\end{definition}

Next we present a rate supermartingale process for the stochastic process with the update in \eqref{ParallelSGDModelUpdate}, introduced in \cite{SaTamingWildNips15}.  

\vspace{.1cm}
\noindent \textbf{Statement 1.} Consider the stochastic process in \eqref{ParallelSGDModelUpdate} with learning rate $\eta < 2 c \varepsilon / G^2$. We define the process $W_t$ by
\begin{align}\label{ProcessW_tHasNotSucceesed}
{\small{W_t (\boldsymbol{\theta}_t, ..., \boldsymbol{\theta}_0) \triangleq \frac{\varepsilon}{2\eta c \varepsilon - \eta^2 G^2} \log \bigg( \frac{e \left\| \boldsymbol{\theta}_t - \boldsymbol{\theta}^* \right\|^2}{\varepsilon} \bigg) +t,}} 
\end{align}
if the algorithm has not entered the success region by iteration $t$, i.e., if $\boldsymbol{\theta}_i \notin \cal{S}$, $\forall i \le t$, and $W_t (\boldsymbol{\theta}_t, ..., \boldsymbol{\theta}_0) = W_{\tau_1-1} (\boldsymbol{\theta}_{\tau_1-1}, ..., \boldsymbol{\theta}_0)$ otherwise, where ${\tau_1}$ is the smallest index for which $\boldsymbol{\theta}_{\tau} \in \cal{S}$. The process $W_t$ is a rate supermartingale with horizon $B = \infty$. It is also $L$-Lipschitz in its first coordinate, where $L \triangleq 2 \sqrt{\varepsilon} (2\eta c \varepsilon - \eta^2 G^2)^{-1}$; that is, given any $\boldsymbol{x}, \boldsymbol{y} \in \mathbb{R}^d, t \ge 1$, and any sequence $\boldsymbol{\theta}_{t-1}, ..., \boldsymbol{\theta}_0$, we have  
\begin{align}\label{EQFirstCoordonate}
{\footnotesize{\left\| W_t \left(\boldsymbol{y}, \boldsymbol{\theta}_{t-1}, ..., \boldsymbol{\theta}_0 \right) - W_t \left(\boldsymbol{x}, \boldsymbol{\theta}_{t-1}, ..., \boldsymbol{\theta}_0 \right) \right\| \le L \left\| \boldsymbol{y} - \boldsymbol{x} \right\|.}}     
\end{align}

\begin{theorem}\label{TheoremMainProbError}
Under the assumptions described above, for
\begin{align}\label{EtaBoundTheorem}
\eta < \frac{2\left( c \varepsilon T - \sqrt{\varepsilon} \sum\nolimits_{t=0}^{T-1} v(t) \right)}{ T G^2},    
\end{align}
the probability that the A-DSGD algorithm does not enter the success region by time $T$ is bounded by
\begin{align}\label{EQTheoremMainProbError}
{\small{{\rm{Pr}} \left\{ E_T \right\} \le \frac{\varepsilon}{(2\eta c \varepsilon - \eta^2 G^2) \left(T-\eta L \sum\nolimits_{t=0}^{T-1}v(t) \right)} \log \bigg( \frac{e \left\| \boldsymbol{\theta}^* \right\|^2}{\varepsilon} \bigg),}} 
\end{align}
where $E_T$ denotes the event of not arriving at the success region at time $T$.
\end{theorem}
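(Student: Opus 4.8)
The plan is to read the A-DSGD recursion as a \emph{perturbation} of the ideal DSGD recursion \eqref{ParallelSGDModelUpdate} and then run it through the rate-supermartingale machinery of Statement~1, in the spirit of the perturbed-iterate analyses of \cite{SaTamingWildNips15,SparseConvergenceAlistarh}. Concretely, combining Assumptions~\ref{AssumpVarianceGrads}--\ref{AssumpSparsePattern} with Lemmas~\ref{LemAMPRecons}--\ref{LemExpectationDifferenceTheta}, the update performed by Algorithm~\ref{ModelUpdateAlg} is \eqref{ModelUpdateAfterAssumpLem}, which I would rewrite as
\[
\boldsymbol{\theta}_{t+1} = \boldsymbol{\theta}_t - \eta\,\tfrac{1}{M}\sum\nolimits_{m=1}^M \boldsymbol{g}_m(\boldsymbol{\theta}_t) + \boldsymbol{e}(t),
\]
where $\boldsymbol{e}(t) \triangleq \eta\big(\tfrac{1}{M}\sum_{m=1}^M(\boldsymbol{g}_m(\boldsymbol{\theta}_t)-\boldsymbol{g}_m^{sp}(\boldsymbol{\theta}_t)) - \sigma_\omega(t)\boldsymbol{w}(t)\big)$ is the per-iteration perturbation, and Lemma~\ref{LemExpectationDifferenceTheta} already supplies the bound $\mathbb{E}[\|\boldsymbol{e}(t)\|]\le \eta\,v(t)$ with $v(t)$ as in \eqref{DefExpectationDifferenceThetaMainSimplev_t}.

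Next I would bring in Statement~1. Since \eqref{EtaBoundTheorem} with $\sum_{t=0}^{T-1}v(t)\ge 0$ forces $\eta<2c\varepsilon/G^2$, the process $W_t$ of \eqref{ProcessW_tHasNotSucceesed} is a rate supermartingale (horizon $\infty$) for the ideal process \eqref{ParallelSGDModelUpdate} and is $L$-Lipschitz in its first coordinate with $L=2\sqrt{\varepsilon}\,(2\eta c\varepsilon-\eta^2G^2)^{-1}$. Fix an iteration $t$ at which the iterates have not yet entered $\mathcal{S}$, and let $\boldsymbol{\theta}_{t+1}^{\mathrm{id}}\triangleq\boldsymbol{\theta}_t-\eta\tfrac{1}{M}\sum_m\boldsymbol{g}_m(\boldsymbol{\theta}_t)$ be the genuine SGD step, so that $\boldsymbol{\theta}_{t+1}=\boldsymbol{\theta}_{t+1}^{\mathrm{id}}+\boldsymbol{e}(t)$. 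Splitting
\[
W_{t+1}(\boldsymbol{\theta}_{t+1},\dots)=W_{t+1}(\boldsymbol{\theta}_{t+1}^{\mathrm{id}},\dots)+\big(W_{t+1}(\boldsymbol{\theta}_{t+1},\dots)-W_{t+1}(\boldsymbol{\theta}_{t+1}^{\mathrm{id}},\dots)\big),
\]
bounding the bracket by $L\|\boldsymbol{e}(t)\|$ via \eqref{EQFirstCoordonate}, using the supermartingale property $\mathbb{E}[W_{t+1}(\boldsymbol{\theta}_{t+1}^{\mathrm{id}},\dots)\mid\mathcal{F}_t]\le W_t(\boldsymbol{\theta}_t,\dots)$ for the genuine SGD step, and invoking Lemma~\ref{LemExpectationDifferenceTheta}, I obtain the drift inequality
\[
\mathbb{E}\big[W_{t+1}(\boldsymbol{\theta}_{t+1},\dots)\mid\mathcal{F}_t\big]\le W_t(\boldsymbol{\theta}_t,\dots)+\eta L\,v(t),
\]
which continues to hold once the iterates are inside $\mathcal{S}$ because $W_t$ is then frozen.

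From here the argument is standard. The drift-corrected process $\widetilde W_t\triangleq W_t(\boldsymbol{\theta}_t,\dots)-\eta L\sum_{s=0}^{t-1}v(s)$ --- with the sum truncated at the first time the iterate enters $\mathcal{S}$, matching the freezing convention of \eqref{ProcessW_tHasNotSucceesed} --- is a nonnegative supermartingale for the A-DSGD iterates, so $\mathbb{E}[\widetilde W_T]\le\widetilde W_0=W_0(\boldsymbol{\theta}_0)=\tfrac{\varepsilon}{2\eta c\varepsilon-\eta^2G^2}\log(e\|\boldsymbol{\theta}^*\|^2/\varepsilon)$, using $\boldsymbol{\theta}_0=\boldsymbol{0}$. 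On the event $E_T$ the iterates avoid $\mathcal{S}$ up to time $T$, so the ``rate'' property yields $W_T(\boldsymbol{\theta}_T,\dots)\ge T$ and hence $\widetilde W_T\ge T-\eta L\sum_{t=0}^{T-1}v(t)$; moreover this quantity is strictly positive exactly because \eqref{EtaBoundTheorem}, once $L$ is substituted, is equivalent to $T>\eta L\sum_{t=0}^{T-1}v(t)$. Markov's inequality applied to the nonnegative $\widetilde W_T$ then gives
\[
{\rm Pr}\{E_T\}\le{\rm Pr}\Big\{\widetilde W_T\ge T-\eta L\sum\nolimits_{t=0}^{T-1}v(t)\Big\}\le\frac{\mathbb{E}[\widetilde W_T]}{T-\eta L\sum_{t=0}^{T-1}v(t)}\le\frac{W_0(\boldsymbol{\theta}_0)}{T-\eta L\sum_{t=0}^{T-1}v(t)},
\]
which is exactly \eqref{EQTheoremMainProbError}.

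I expect the real work to sit in the second paragraph: establishing the per-iteration drift $\eta L\,v(t)$ requires chaining the Lipschitz estimate \eqref{EQFirstCoordonate} with the supermartingale property of $W$ \emph{and} with Lemma~\ref{LemExpectationDifferenceTheta}, which in turn rests on Assumptions~\ref{AssumpVarianceGrads}--\ref{AssumpSparsePattern}, the AMP noise characterization (Lemma~\ref{LemAMPRecons}), the sparsification estimate $\lambda=\sqrt{(d-k)/d}$ (Corollary~\ref{CorSparsek}), the Gaussian-norm concentration (Lemma~\ref{LemL2NormGaussianVec}), and the bound on $\mathbb{E}[\sigma_\omega(t)]$ (Lemma~\ref{LemVarianceSigmaOmega}); all of these must be assembled consistently. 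A secondary, more technical nuisance is the bookkeeping around the ``frozen after entering $\mathcal{S}$'' convention in \eqref{ProcessW_tHasNotSucceesed}: it is what keeps $\widetilde W_t$ nonnegative (so Markov applies) and what ensures the accumulated drift grows only while $\boldsymbol{\theta}_t\notin\mathcal{S}$. It is also worth checking at the outset that \eqref{EtaBoundTheorem} is nothing but the requirement $T-\eta L\sum_{t=0}^{T-1}v(t)>0$, i.e.\ that the stated bound is non-vacuous.
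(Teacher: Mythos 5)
Your proposal is correct and follows essentially the same route as the paper's proof: the drift-corrected process $\widetilde W_t$ you define is exactly the paper's $V_t$, the per-iteration drift inequality is obtained identically by chaining the $L$-Lipschitz property of $W_t$ with its supermartingale property and Lemma~\ref{LemExpectationDifferenceTheta}, and your final Markov step is equivalent to the paper's conditioning on $E_T$ versus $E_T^c$. The only cosmetic difference is that the paper phrases the last step as $\mathbb{E}[V_0]\ge\mathbb{E}[V_T\mid E_T]\Pr\{E_T\}$ rather than invoking Markov's inequality by name.
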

\begin{proof}
See Appendix \ref{AppTheoremMainProbError}.  
\end{proof} 
 
We highlight that the reduction term $\sum\nolimits_{t=0}^{T-1} v(t)$ in the denominator is due to the sparsification, random projection (compression), and the noise introduced by the wireless channel.  
To show that ${\rm{Pr}} \left\{ E_T \right\} \to 0$ asymptotically, we consider a special case $P_t = \bar{P}$, in which 
{\small \begin{align}\label{SumVtPtBarP}
& \sum\nolimits_{t=0}^{T-1} v(t) =  \left( \frac{2\lambda G}{1 - \lambda} + \frac{\sigma \rho(\delta)}{M \sqrt{\bar{P}}} \left( \frac{ \sigma_{\rm{max}}G}{1-\lambda} +1 \right)  \right) T \nonumber\\
& - \left( \frac{\lambda (1+\lambda)(1-\lambda^T) G}{(1-\lambda)^2} + \frac{\sigma \rho(\delta) \sigma_{\rm{max}}(1-\lambda^{T+1})G}{M \sqrt{\bar{P}} (1-\lambda)^2} \right).  
\end{align}}
After replacing $\sum\nolimits_{t=0}^{T-1} v(t)$ in \eqref{EQTheoremMainProbError}, it is easy to verify that, for $\eta$ bounded as in \eqref{EtaBoundTheorem}, ${\rm{Pr}} \left\{ E_T \right\} \to 0$ as $T \to \infty$.

\section{Experiments}\label{SecExperiments}
Here we evaluate the performances of A-DSGD and D-DSGD for the task of image classification. We run experiments on the MNIST dataset \cite{LeCunMNIST} with $N = 60000$ training and $10000$ test data samples, and train a single layer neural network with $d=7850$ parameters utilizing ADAM optimizer \cite{ADAMDC}. We set the channel noise variance to $\sigma^2=1$. The performance is measured as the accuracy with respect to the test dataset, called \textit{test accuracy}, versus iteration count $t$.

We consider two scenarios for data distribution across devices: in \textit{IID data distribution}, $B$ training data samples, selected at random, are assigned to each device at the beginning of training; while in \textit{non-IID data distribution}, each device has access to $B$ training data samples, where $B/2$ of the training data samples are selected at random from only one class/label; that is, for each device, we first select two classes/labels at random, and then randomly select $B/2$ data samples from each of the two classes/labels. At each iteration, devices use all the $B$ local data samples to compute their gradient estimates, i.e., the batch size is equal to the size of the local datasets.

When performing A-DSGD, we use mean removal power allocation scheme only for the first 20 iterations. The reason is that the gradients are more aggressive at the beginning of A-DSGD, and their mean values are expected to be relatively diverse, while they converge to zero.  
Also, for comparison, we consider a benchmark, where the PS receives the average of the actual gradient estimates at the devices, $\frac{1}{M} \sum\nolimits_{m=1}^{M} \boldsymbol{g}_m \left(\boldsymbol{\theta}_t \right)$, in a noiseless fashion, and updates the model parameter vector according to this error-free observation. We refer to this scheme as the error-free shared link approach. 

\begin{figure}[t!]
\centering
\includegraphics[scale=0.635,trim={20pt 7pt 45pt 40pt},clip]{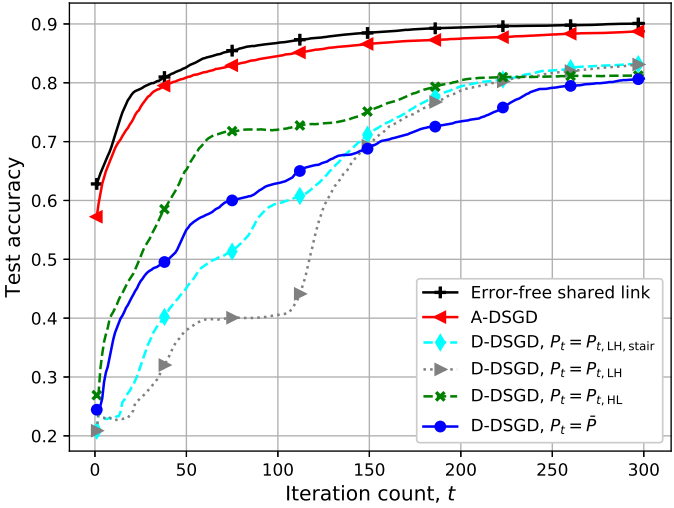}
\caption{Performance of the A-DSGD and D-DSGD algorithms for different power allocation schemes when $M=25$, $B=1000$, $\bar{P}=200$, $s=d/2$, and $k=s/2$.}
\label{ADSGD_DDSGD_power_allocation}
\end{figure}

In Fig. \ref{Fig_ADSGD_DDSGS_IID_NonIID}, we investigate IID and non-IID data distribution scenarios. For comparison, we also consider two alternative digital schemes that employ SignSGD \cite{SignSGDBernstein} and QSGD \cite{QSGDQuantDAlistarh} algorithms for gradient compression, where each device applies the coding schemes proposed in \cite{SignSGDBernstein} and \cite{QSGDQuantDAlistarh} to a limited number of the elements of its gradient vector, specified by its link capacity. To be more precise, assume that, at iteration $t$, at each device a limited number of $q_{t, \rm{S}}$ and $q_{t, \rm{Q}}$ entries of the gradient estimate with highest magnitudes are selected for delivery with SignSGD and QSGD, respectively. Following the SignSGD scheme \cite{SignSGDBernstein}, at iteration $t$, a total number of 
\begin{align}\label{SignSGDDeliveredBits}
r_{t, \rm{S}} = \log_2 \binom{d}{q_{t, \rm{S}}} + {q_{t, \rm{S}}} \; \mbox{bits}, \quad \forall t,    
\end{align}
are delivered to transmit the sign of each selected entry, as well as their locations, and ${q_{t, \rm{S}}}$ is then highest integer satisfying $r_{t, \rm{S}} \le R_t$. On the other hand, following the QSGD approach \cite{QSGDQuantDAlistarh}, each device sends a quantized version of each of the $q_{t, \rm{Q}}$ selected entries, the $l_2$-norm of the resultant vector with $q_{t, \rm{Q}}$ non-zero entries, and the locations of the non-zero entries. Hence, for a quantization level of $2^{l_{\rm{Q}}}$, each device transmits 
\begin{align}\label{QSGDDeliveredBits}
r_{t, \rm{Q}} = 32 + \log_2 \binom{d}{q_{t, \rm{Q}}} + (1+l_{\rm{Q}}) {q_{t, \rm{Q}}} \; \mbox{bits}, \quad \forall t,    
\end{align}
where ${q_{t, \rm{Q}}}$ is set as then highest integer satisfying $r_{t, \rm{Q}} \le R_t$. For QSGD, we consider a quantization level $l_{\rm{Q}}=2$.  
We consider $s=d/2$ channel uses, $M=25$ devices each with $B=1000$ training data samples, average power $\bar{P}=500$, and we set a fixed ratio $k= \left\lfloor {s/2} \right\rfloor$ for sparsification, and $P_t = \bar{P}$, $\forall t$. Observe that A-DSGD outperforms all digital approaches that separate computation from communication. For both data distribution scenarios, the gap between the performance of the error-free shared link approach and that of A-DSGD is very small, and D-DSGD significantly outperforms the other two digital schemes SignSGD and QSGD. Unlike the digital schemes, the performance loss of A-DSGD in non-IID data distribution scenario compared to the IID scenario is negligible illustrating the robustness of A-DSGD to the bias in data distribution. The reduction in the convergence rate of A-DSGD in the non-IID scenario is mainly due to the mismatch in the sparsity patterns of gradients at different devices, especially in the initial iterations. We also observe that the performance of D-DSGD degrades much less compared to SignSGD and QSGD when the data is biased across devices.

\begin{figure}[t!]
\centering
\includegraphics[scale=0.635,trim={20pt 7pt 45pt 40pt},clip]{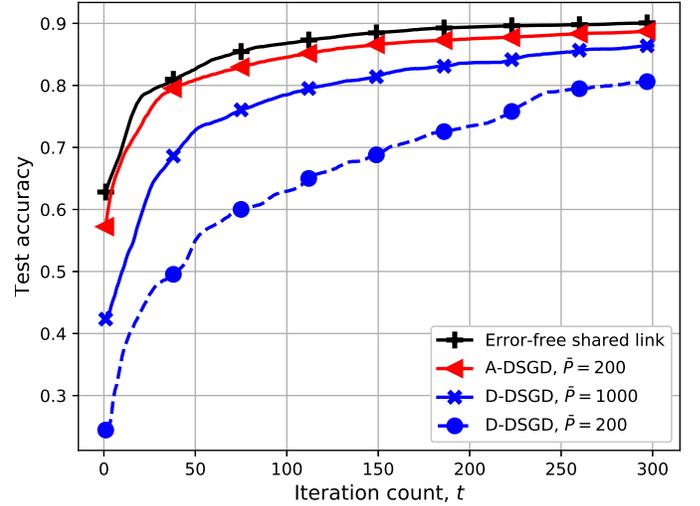}
\caption{Performance of the A-DSGD and D-DSGD algorithms for different $\bar{P}$ values $\bar{P} \in \{200, 1000\}$, when $M=25$, $B=1000$, $s=d/2$, $k=s/2$, and $P_t = \bar{P}$, $\forall t$.}
\label{UPA_SA_SD_DifPower}
\end{figure}

In all the following experiments, we only consider IID data distribution scenario. In Fig. \ref{ADSGD_DDSGD_power_allocation} we investigate the performance of D-DSGD for various power allocation designs for $T=300$ DSGD iterations with $M=25$, $B=1000$, $\bar{P}=200$, $s=d/2$, and $k=\left\lfloor {s/2} \right\rfloor$. We consider four different power allocation schemes: constant power allocation $P_t = \bar{P}$, and
\begin{subequations}\label{PowerAllocationApproaches}
\begin{align}\label{PowerAllocationApproaches_stair}
{P_{t,{\rm{LH, stair}}}} &= 100 \Big( \frac{2}{299}(t-1)+1 \Big), \quad t \in [300],\\
{P_{t,{\rm{LH}}}} &= \begin{cases} 
100, & \mbox{if $1 \le t \le 100$},\\
200, & \mbox{if $101 \le t \le 200$},\\ 
300, & \mbox{if $201 \le t \le 300$},
\end{cases}\label{PowerAllocationApproaches_LH}\\
{P_{t,{\rm{HL}}}} &= \begin{cases} 
300, & \mbox{if $1 \le t \le 100$},\\
200, & \mbox{if $101 \le t \le 200$},\\ 
100, & \mbox{if $201 \le t \le 300$}.\label{PowerAllocationApproaches_HL}
\end{cases}
\end{align}
\end{subequations}
With ${P_{t,{\rm{LH, stair}}}}$, average transmit power increases linearly from $P_1=100$ to $P_{300}=300$. We also consider the error-free shared link bound, and A-DSGD with $P_t = \bar{P}$, $\forall t$. Note that the performance of A-DSGD is almost the same as in Fig. \ref{Fig_ADSGD_DDSGS_IID}, despite $60\%$ reduction in average transmit power. The performance of D-DSGD falls short of A-DSGD for all the power allocation schemes under consideration. Comparing the digital scheme for various power allocation designs, we observe that it is better to allocate less power to the initial iterations, when the gradients are more aggressive, and save the power for the later iterations to increase the final accuracy; even though this will result in a lower convergence speed. Also, increasing the power slowly, as in ${P_{t,{\rm{LH, stair}}}}$, provides a better performance in general, compared to occassional jumps in the transmit power. Overall, letting $P_t$ vary over time provides an additional degree-of-freedom, which can improve the accuracy of D-DSGD by designing an efficient power allocation scheme having statistical knowledge about the dynamic of the gradients over time.            

In Fig. \ref{UPA_SA_SD_DifPower}, we compare the performance of A-DSGD with that of D-DSGD for different values of the available average transmit power $\bar{P}\in \{ 200, 1000\}$. We consider $s=d/2$, $M=25$, and $B=1000$. We set $k= \left\lfloor {s/2} \right\rfloor$ and $P_t = \bar{P}$, $\forall t$. As in Fig. \ref{Fig_ADSGD_DDSGS_IID_NonIID}, we observe that A-DSGD outperforms D-DSGD, and the gap between A-DSGD and the error-free shared link approach is relatively small. We did not include the performance of A-DSGD for $\bar{P} = 1000$ since it is very close to the one with $\bar{P} = 200$. Unlike A-DSGD, the performance of D-DSGD significantly deteriorates by reducing $\bar{P}$. Thus, we conclude that the analog approach is particularly attractive for learning across low-power devices as it allows them to align their limited transmission powers to dominate the noise term.

In Fig. \ref{UPA_SA_SD_Difdevices}, we compare A-DSGD and D-DSGD for different channel bandwidth values, $s \in \{ d/2, 3d/10 \}$, and $M=20$ and $B=1000$, where we set $P_t = 500$, $\forall t$, and $k = \left\lfloor {s/2} \right\rfloor$. Performance deterioration of D-DSGD is notable when channel bandwidth reduces from $s = d/2$ to $s=3 d/10$. On the other hand, we observe that A-DSGD is robust to the channel bandwidth reduction as well, providing yet another advantage of analog over-the-air computation for edge learning. 

\begin{figure}[t!]
\centering
\includegraphics[scale=0.635,trim={20pt 7pt 45pt 40pt},clip]{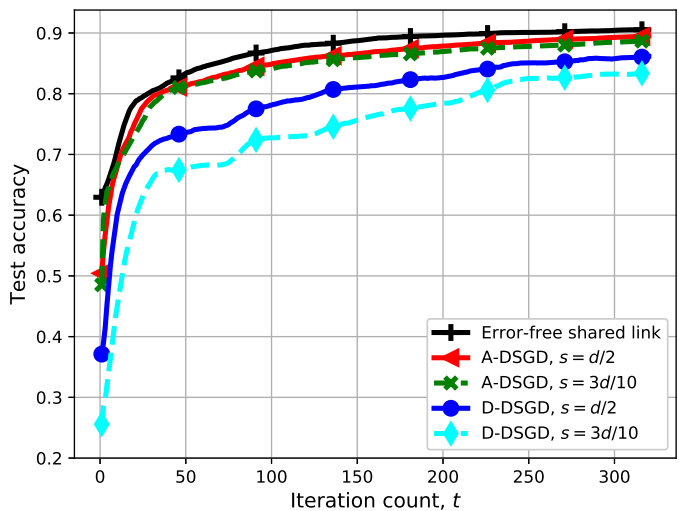}
\caption{Performance of the A-DSGD and D-DSGD algorithms for different $s$ values, $s \in \{ d/2, 3d/10 \}$, when $M=20$, $B=1000$, $\bar{P}=500$, and $k = \left\lfloor {s/2} \right\rfloor$, where $P_t = \bar{P}$, $\forall t$.}
\label{UPA_SA_SD_Difdevices}
\end{figure}

\begin{figure}[t!]
\centering
\includegraphics[scale=0.635,trim={20pt 7pt 45pt 40pt},clip]{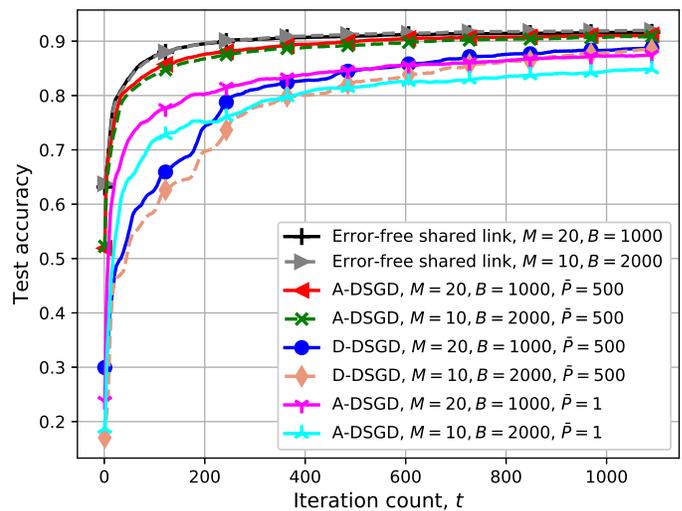}
\caption{Performance of the A-DSGD and D-DSGD algorithms for different $\left( M,B \right)$ pairs $(M,B) \in \{(10,2000), (20,1000)\}$, where $s =d/2$, $k = s/2$, and $P_t = \bar{P}$, $\forall t$.}
\label{ADSGD_DDSGD_M_B}
\end{figure}

\begin{figure*}[t!]
\centering
\begin{subfigure}{.5\textwidth}
  \centering
  \includegraphics[scale=0.62,trim={13pt 7pt 45pt 40pt},clip]{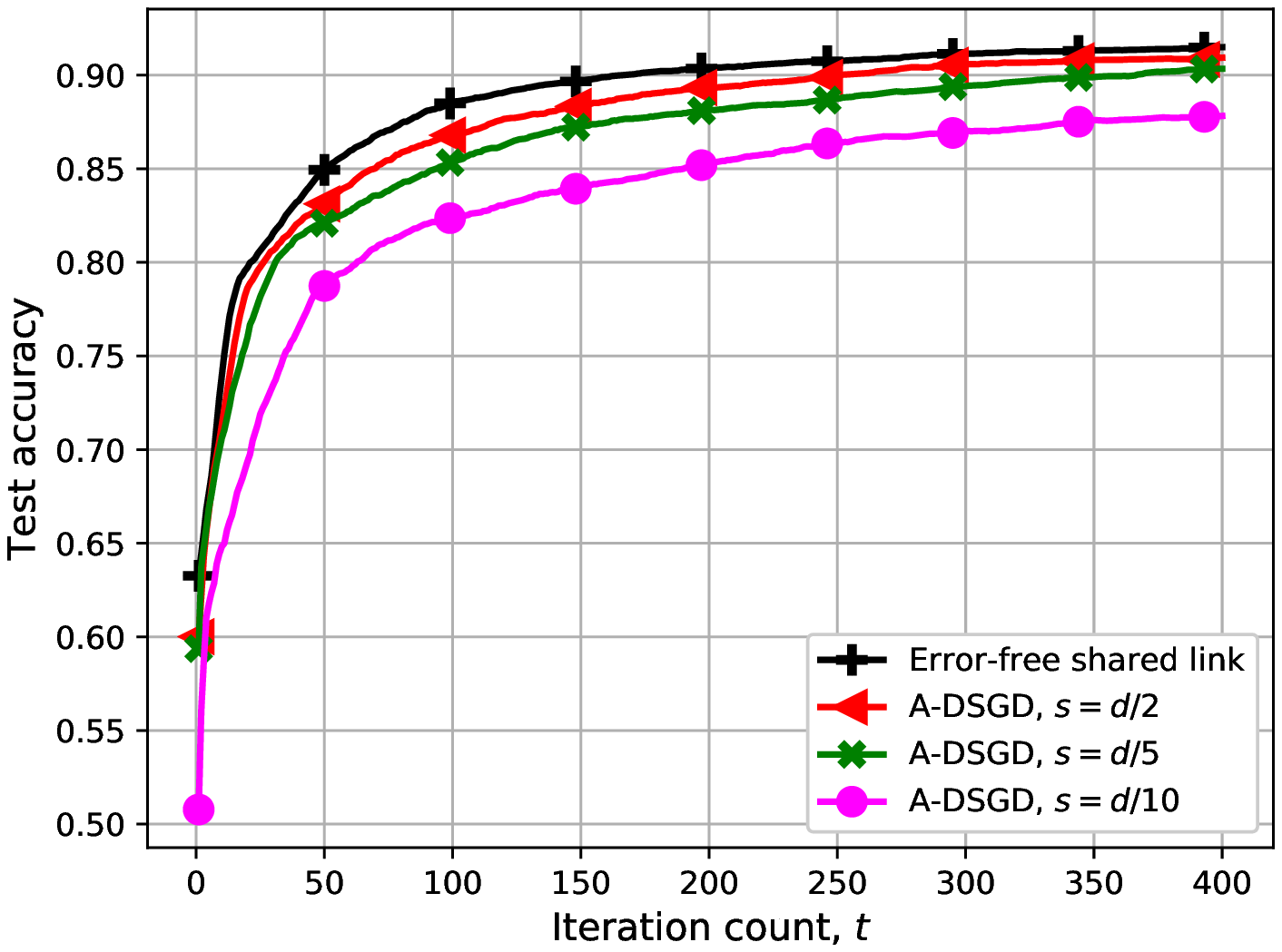}
  \caption{Test accuracy versus iteration count, $t$}
  \label{EPA_UPA_SA_Err_VsIter}
\end{subfigure}%
\begin{subfigure}{.5\textwidth}
  \centering
  \includegraphics[scale=0.62,trim={13pt 7pt 43pt 40pt},clip]{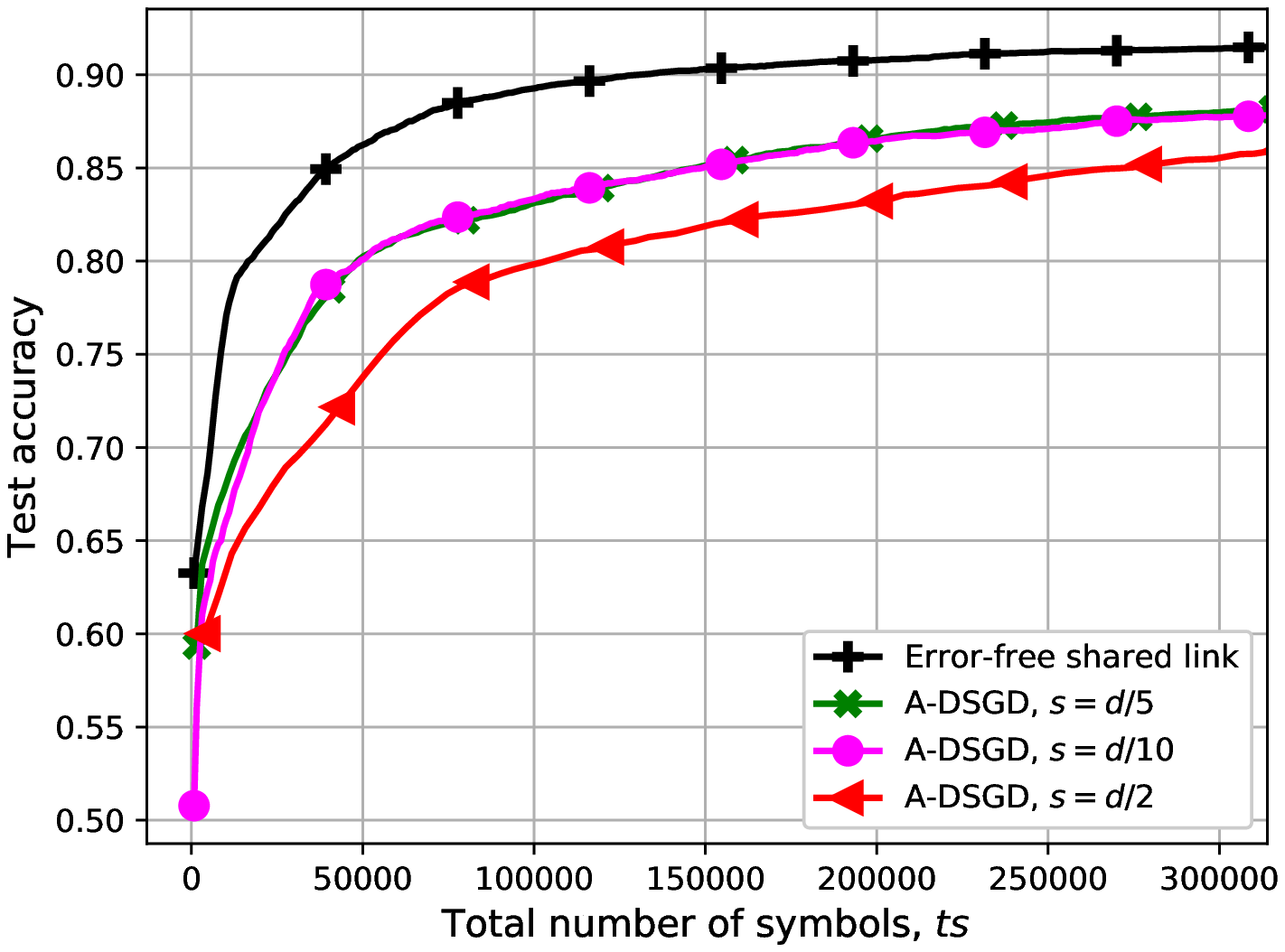}
  \caption{Test accuracy versus number of transmitted symbols, $t s$}
  \label{EPA_UPA_SA_Acc_VsIter}
\end{subfigure}
\caption{Performance of the A-DSGD algorithm for different $s$ values $s \in \{ d/10, d/5, d/2 \}$, when $M=25$, $B=1000$, $\bar{P}=50$, and $k = \left\lfloor {4s/5} \right\rfloor$, where $P_t = \bar{P}$, $\forall t$.}
\label{EPA_UPA_SA_VsIter}
\end{figure*}

In Fig. \ref{ADSGD_DDSGD_M_B}, we compare the performance of A-DSGD and D-DSGD for different $M$ and $\bar{P}$ values, while keeping the total number of data samples available across all the devices, $MB$, fixed. We consider $(M,B) \in \left\{(10, 2000), (20, 1000)\right\}$, and, for each setting, $\bar{P} \in \{ 1,500 \}$. Given $s=\left\lfloor {d/4} \right\rfloor$, we set $k=\left\lfloor {s/2} \right\rfloor$, and $P_t = \bar{P}$, $\forall t$. We highlight that, for $\bar{P} = 1$, D-DSGD fails since the devices cannot transmit any information bits. We observe that increasing $M$ while fixing $MB$ does not have any visible impact on the performance in the error-free shared link setting. The convergence speed of A-DSGD improves with $M$ for both $\bar{P}$ values. We note that although the accuracy of individual gradient estimates degrade with $M$ (due to the reduction in the training sample size available at each device), A-DSGD benefits from the additional power introduced by each device, which increases the robustness against noise thanks to the superposed signals each with average power $\bar{P}$. Similarly, the convergence speed of D-DSGD improves with $M$ for $\bar{P}=500$. Note that even though the number of bits allocated to each device reduces with $M$, i.e., each device sends a less accurate estimate of its gradient, since each new device comes with its own additional power, the total number of bits transmitted per data sample increases. 
Also, we observe that, for $\bar{P} = 500$, the performance of A-DSGD improves slightly by increasing $M$, while for $\bar{P} = 1$, the improvement is notable. Thus, even for large enough $\bar{P}$, increasing $M$ slightly improves the performance of A-DSGD. 

In Fig. \ref{EPA_UPA_SA_VsIter}, we investigate the impact of $s$ on the performance of A-DSGD. We consider $s \in \{ d/10, d/5, d/2 \}$ for a distributed system with $M=25$, $B = 1000$, $\bar{P} = 50$, and for any $s$ values, we set $k = \left\lfloor {4s/5} \right\rfloor$, and $P_t = \bar{P}$, $\forall t$. In Fig. \ref{EPA_UPA_SA_Err_VsIter}, we consider different $s$ values for each iteration of A-DSGD. This may correspond to allocating different number of channel resources for each iteration, for example, allocating more sub-bands through OFDM. On the other hand, in Fig. \ref{EPA_UPA_SA_Acc_VsIter}, we limit the number of channel uses at each communication round, and assume that the transmission time at each iteration linearly increases with $s$, and evaluate the performance of A-DSGD with respect to $ts$, the total number of transmitted symbols. This would correspond to the convergence time if these symbols are transmitted over time, using a single subcarrier. Therefore, assigning more resources to each iteration, i.e., increasing $s$, means that less number of iterations can be implemented by the same time. Observe from Fig. \ref{EPA_UPA_SA_Err_VsIter} that, as expected, utilizing more channel uses improves the performance both in terms of convergence speed and final accuracy. 
On the other hand, when we consider the total number of transmitted symbols in Fig. \ref{EPA_UPA_SA_Acc_VsIter} the performance of A-DSGD improves significantly by reducing $s$ from $s=d/2$ to $s = d/5$, which shows that performing more SGD iterations with less accurate gradient estimates at each iteration, i.e., smaller $s$ values, can be better. This clearly shows the benefit of A-DSGD, in which the devices can transmit low dimensional gradient vectors at each iteration instead of sending the entire gradient vectors over several iterations.
However, this trend does not continue, and the performance of A-DSGD almost remains unchanged by further reducing $s$ from $s=d/5$ to $s=d/10$. 
According to the results illustrated in Fig. \ref{EPA_UPA_SA_Acc_VsIter}, we conclude that, when the communication bandwidth is limited, the value $s$ can be a design parameter taking into account the total desired time (number of iterations) to finish an ML task through DSGD. We note here that we do not consider the computation time, which remain mostly the same independent of $s$ in different devices, as the same number of gradients are computed. 

\section{Conclusions}\label{SecConc}
We have studied collaborative/ federated learning at the wireless edge, where wireless devices aim to minimize an empirical loss function collaboratively with the help of a remote PS. Devices have their own local datasets, and they communicate with the PS over a wireless MAC. We have assumed that the communication from the PS to the devices is noiseless, so the updated parameter vector is shared with the devices in a lossless fashion. 

As opposed to the standard approach in FL, which ignores the channel aspects, and simply aims at reducing the communication load by compressing the gradients to a fixed level, here we incorporate the wireless channel characteristics and constraints into the system design. We considered both a digital approach (D-DSGD) that separates computation and communication, and an analog approach (A-DSGD) that exploits the superposition property of the wireless channel to have the average gradient computed over-the-air.

In D-DSGD, the amount of information bits sent by each device at each iteration can be adaptively adjusted with respect to the average transmit power constraint $\bar{P}$, and each device quantizes its gradient followed by capacity-achieving channel coding. We have shown that, under a finite power constraint, the performance of D-DSGD can be improved by allocating more power to later iterations to compensate for the decreasing gradient variance, and provide better protection against noise.

In A-DSGD, we have proposed gradient sparsification followed by random linear projection employing the same projection matrix at all the devices. This allowed reducing the typically very large parameter vector dimension to the limited channel bandwidth. The devices then transmit these compressed gradient vectors simultaneously over the MAC to exploit the superposition property of the wireless medium. This analog approach allows a much more efficient use of the available limited channel bandwidth. 

Numerical results have shown significant improvement in the performance with the analog approach, particularly in the low-power and low-bandwidth regimes. Moreover, when non-IID data distribution is considered, the performance loss of both A-DSGD and D-DSGD is much smaller compared to SignSGD \cite{SignSGDBernstein} and QSGD \cite{QSGDQuantDAlistarh}, A-DSGD being more robust to bias in data distribution across devices. We have also observed that the performances of both D-DSGD and A-DSGD improve as more devices are introduced thanks to the additional power introduced by each device. 



\appendices

\section{Proof of Lemma \ref{LemVarianceSigmaOmega}}\label{AppLemVarianceSigmaOmega}
By plugging \eqref{ScaleFacCalcUPA2} into \eqref{SigmaWBasicDef}, we have
\begin{align}\label{EQAppLemVarianceSigmaOmega_1}
\sigma_{\omega} (t) &= \sigma \bigg( \sum\limits_{m=1}^{M} \frac{\sqrt{P_t}}{\sqrt{\left\| \tilde{\boldsymbol{g}}_m \left( \boldsymbol{\theta}_{t} \right) \right\|^2 + 1}} \bigg)^{-1} \nonumber\\
& \le \sigma \bigg( \sum\limits_{m=1}^{M} \frac{\sqrt{P_t}}{\left\| \tilde{\boldsymbol{g}}_m \left( \boldsymbol{\theta}_{t} \right) \right\| + 1} \bigg)^{-1}.     
\end{align}
Here we find an upper bound on $\left\| \tilde{\boldsymbol{g}}_m \left( \boldsymbol{\theta}_{t} \right) \right\|$. We have, $\forall m,t$, 
\begin{align}\label{EQAppLemVarianceSigmaOmega_2}
\left\| \tilde{\boldsymbol{g}}_m \left( \boldsymbol{\theta}_{t} \right) \right\| & = \left\| \boldsymbol{A}_{s-1} {\boldsymbol{g}}^{\rm{sp}}_m \left( \boldsymbol{\theta}_{t} \right) \right\| \nonumber\\
& \le \left\| \boldsymbol{A}_{s-1} \right\| \left( \left\| {\boldsymbol{g}}_m \left( \boldsymbol{\theta}_{t} \right) \right\| + \left\| {\boldsymbol{\Delta}}_m \left( t \right) \right\| \right).     
\end{align}
We note that $\left\| \boldsymbol{A}_{s-1} \right\|$ corresponds to the largest singular value of random matrix $\boldsymbol{A}_{s-1}$. We use the result presented in \cite[Theorem 2]{LimitSmallEigenBaiYin} to estimate $\left\| \boldsymbol{A}_{s-1} \right\|$. According to \cite[Theorem 2]{LimitSmallEigenBaiYin}, for a random matrix $\boldsymbol{A} \in \mathbb{R}^{n \times m}$ with the $(i,j)$-th entry i.i.d. with $\mathbb{E} \left[ a_{i,j} \right] = 0$ and $\mathbb{E} \left[ a_{i,j}^2 \right] = 1/n^2$, $i \in [n]$, $j \in [m]$, and $n = \kappa m$, as $m \to \infty$, the largest singular value of $\boldsymbol{A}$ is asymptotically given by $\sigma_{\rm{max}} (\boldsymbol{A}) = \sqrt{m/n}+1$. Based on this result, since $d \gg 0$, we estimate $\left\| \boldsymbol{A}_{s-1} \right\|$ with its asymptotic largest singular value denoted by $\sigma_{\rm{max}} = \sqrt{d/(s-1)}+1$. Next we upper bound $\left\| {\boldsymbol{\Delta}}_m \left( t \right) \right\|$. According to Corollary \ref{CorSparsek}, we have, $\forall m,t$, 
\begin{align}\label{EQAppLemVarianceSigmaOmega_3}
\left\| {\boldsymbol{\Delta}}_m \left( t \right) \right\| &\le \lambda \left\| {\boldsymbol{g}}_m \left( \boldsymbol{\theta}_{t-1} \right) + {\boldsymbol{\Delta}}_m \left( t-1 \right) \right\| \nonumber\\
& \le \lambda (\left\| {\boldsymbol{g}}_m \left( \boldsymbol{\theta}_{t-1} \right) \right\| +  \left\|{\boldsymbol{\Delta}}_m \left( t-1 \right) \right\|), 
\end{align}
where iterating it downward yields
\begin{align}\label{EQAppLemVarianceSigmaOmega_4}
\left\| {\boldsymbol{\Delta}}_m \left( t \right) \right\| \le \sum\nolimits_{i=0}^{t-1} \lambda^{t-i} \left\| {\boldsymbol{g}}_m \left( \boldsymbol{\theta}_{i} \right) \right\|.  
\end{align}
By replacing $\left\| \boldsymbol{A}_{s-1} \right\|$ with $\sigma_{\rm{max}}$ and plugging \eqref{EQAppLemVarianceSigmaOmega_4} into \eqref{EQAppLemVarianceSigmaOmega_2}, it follows
\begin{align}\label{EQAppLemVarianceSigmaOmega_5}
\left\| \tilde{\boldsymbol{g}}_m \left( \boldsymbol{\theta}_{t} \right) \right\| \le \sigma_{\rm{max}} \sum\nolimits_{i=0}^{t} \lambda^{t-i} \left\| {\boldsymbol{g}}_m \left( \boldsymbol{\theta}_{i} \right) \right\|.     
\end{align}
Using the above inequality in \eqref{EQAppLemVarianceSigmaOmega_1} yields
\begin{align}\label{EQAppLemVarianceSigmaOmega_6}
\sigma_{\omega} (t) \le \frac{\sigma}{\sqrt{P_t}} \bigg( \sum\limits_{m=1}^{M} \frac{1}{\sigma_{\rm{max}} \sum\nolimits_{i=0}^{t} \lambda^{t-i} \left\| {\boldsymbol{g}}_m \left( \boldsymbol{\theta}_{i} \right) \right\| + 1} \bigg)^{-1}.    
\end{align}
For $a_1, ..., a_n \in \mathbb{R}^+$, we prove in Appendix \ref{AppArithmaticMeanInverseMean} that 
\begin{align}\label{ArithmaticMeanInverseMean}
\frac{1}{\sum\nolimits_{i=1}^{n} \frac{1}{a_i}} \le \frac{\sum\nolimits_{i=1}^{n} a_i}{n^2}.     
\end{align}
According to \eqref{ArithmaticMeanInverseMean}, we can rewrite \eqref{EQAppLemVarianceSigmaOmega_6} as follows:
\begin{align}\label{EQAppLemVarianceSigmaOmega_7}
\sigma_{\omega} (t) &\le \frac{\sigma}{M^2 \sqrt{P_t}}  \sum\nolimits_{m=1}^{M} \left( \sigma_{\rm{max}} \sum\nolimits_{i=0}^{t} \lambda^{t-i} \left\| {\boldsymbol{g}}_m \left( \boldsymbol{\theta}_{i} \right) \right\| + 1 \right).    
\end{align}
Independence of the gradients across time and devices yields
\begin{align}\label{EQAppLemVarianceSigmaOmega_8}
\mathbb{E} \left[ \sigma_{\omega} (t) \right] & \le  \frac{\sigma}{M^2 \sqrt{P_t}}  \bigg( \sigma_{\rm{max}} \sum\limits_{i=0}^{t} \lambda^{t-i}  \sum\limits_{m=1}^{M} \mathbb{E} \left[ \left\| {\boldsymbol{g}}_m \left( \boldsymbol{\theta}_{i} \right) \right\| \right] + M \bigg)  \nonumber\\
& \le \frac{\sigma}{M \sqrt{P_t}} \left( \sigma_{{\rm{max}}} \bigg( \frac{1-\lambda^{t+1}}{1-\lambda} \bigg) G + 1 \right).    
\end{align}

\section{Proof of Inequality \eqref{ArithmaticMeanInverseMean}}\label{AppArithmaticMeanInverseMean}
To prove the inequality in \eqref{ArithmaticMeanInverseMean}, we need to show that, for $a_1, ..., a_n >0$,
\begin{align}\label{AppArithmaticMeanInverseMean_1}
\frac{n^2\prod\nolimits_{i=1}^{n}a_i}{\sum\nolimits_{i=1}^{n} \prod\nolimits_{j=1, j \ne i}^{n}a_j} \le \sum\limits_{i=1}^{n} a_i,    
\end{align}
which  corresponds to 
\begin{align}\label{AppArithmaticMeanInverseMean_2}
\sum\limits_{i=1}^{n} \prod\limits_{j=1, j \ne i}^{n}a_j \sum\limits_{k=1}^{n} a_k - n^2\prod\limits_{i=1}^{n}a_i \ge 0.
\end{align}
We have 
{\small{\begin{align}\label{AppArithmaticMeanInverseMean_3}
& \sum\nolimits_{i=1}^{n} \prod\nolimits_{j=1, j \ne i}^{n}a_j \sum\nolimits_{k=1}^{n} a_k - n^2\prod\nolimits_{k=1}^{n}a_k \nonumber\\
& = \sum\limits_{i=1}^{n} \left( \prod\nolimits_{j=1}^{n}a_j + \sum\nolimits_{j=1, j\ne n}^{n} a_j^2 \prod\nolimits_{k=1, k \ne i, j}^{n}a_k  \right) - n^2\prod\limits_{k=1}^{n}a_k \nonumber\\
& = \sum\nolimits_{i=1}^{n} \sum\nolimits_{j=1, j\ne n}^{n} a_j^2 \prod\nolimits_{k=1, k \ne i, j}^{n}a_k - n(n-1) \prod\nolimits_{k=1}^{n}a_k\nonumber\\
& = \sum\nolimits_{(i,j)\in [n]^2, i \ne j} (a_i^2 + a_j^2) \prod\nolimits_{k=1, k \ne i, j}^{n}a_k - n(n-1) \prod\nolimits_{k=1}^{n}a_k\nonumber\\
& = \sum\nolimits_{(i,j)\in [n]^2, i \ne j} (a_i^2 + a_j^2) \prod\nolimits_{k=1, k \ne i, j}^{n}a_k \nonumber\\
& \qquad \qquad \qquad \qquad \qquad \quad - 2 \sum\nolimits_{(i,j)\in [n]^2, i \ne j} a_i a_j \prod\nolimits_{k=1, k \ne i, j}^{n} a_k\nonumber\\
& = \sum\nolimits_{(i,j)\in [n]^2, i \ne j} (a_i^2 + a_j^2 - 2a_ia_j) \prod\nolimits_{k=1, k \ne i, j}^{n}a_k\nonumber\\
& = \sum\nolimits_{(i,j)\in [n]^2, i \ne j} (a_i - a_j)^2 \prod\nolimits_{k=1, k \ne i, j}^{n}a_k \ge 0,
\end{align}}}
which completes the proof of inequality \eqref{ArithmaticMeanInverseMean}.

\section{Proof of Lemma \ref{LemExpectationDifferenceTheta}}\label{AppLemExpectationDifferenceTheta}
By taking the expectation with respect to the gradients, we have
{\small
\newcommand\firstequalityFour{\mathrel{\overset{\makebox[0pt]{\mbox{\normalfont\tiny\sffamily (a)}}}{=}}}
\newcommand\firstinequalityFour{\mathrel{\overset{\makebox[0pt]{\mbox{\normalfont\tiny\sffamily (b)}}}{\le}}}
\newcommand\secondinequalityFour{\mathrel{\overset{\makebox[0pt]{\mbox{\normalfont\tiny\sffamily (c)}}}{\le}}}
\begin{align}\label{EQAppLemExpectationDifferenceTheta}
& \mathbb{E} \left[ \left\| \eta \frac{1}{M} \sum\limits_{m=1}^{M} \left( \boldsymbol{g}_m \left( \boldsymbol{\theta}_t \right)  - {\rm{sp}}_k \left( \boldsymbol{g}_m \left( \boldsymbol{\theta}_t \right) + {\boldsymbol{\Delta}_{m} (t)} \right) \right) - \sigma_{\omega}(t) \boldsymbol{w}(t) \right\| \right] \nonumber\\
& \le \eta \mathbb{E} \left[ \left\| \frac{1}{M} \sum\limits_{m=1}^{M} \left( \boldsymbol{g}_m \left( \boldsymbol{\theta}_t \right) + {\boldsymbol{\Delta}_{m} (t)}  - {\rm{sp}}_k \left( \boldsymbol{g}_m \left( \boldsymbol{\theta}_t \right) + {\boldsymbol{\Delta}_{m} (t)} \right) \right) \right\| \right] \nonumber\\
& \; \; \;\; + \eta \mathbb{E} \left[ \left\| \frac{1}{M} \sum\nolimits_{m=1}^{M} {\boldsymbol{\Delta}_{m} (t)} \right\| \right] + \eta \left\| \boldsymbol{\omega} (t) \right\| \mathbb{E} \left[ \sigma_{\omega}(t) \right] \nonumber\\
& \firstequalityFour \eta \mathbb{E} \left[ \left\| \frac{1}{M} \sum\nolimits_{m=1}^{M} {\boldsymbol{\Delta}_{m} (t+1)} \right\| \right] + \eta \mathbb{E} \left[ \left\| \frac{1}{M} \sum\nolimits_{m=1}^{M} {\boldsymbol{\Delta}_{m}  (t)} \right\| \right] \nonumber\\
& \;\;\;\; + \eta \left\| \boldsymbol{\omega} (t) \right\| \mathbb{E} \left[ \sigma_{\omega}(t) \right]\nonumber\\
& \firstinequalityFour \eta \mathbb{E} \left[ \frac{1}{M} \sum\nolimits_{m=1}^{M} \sum\nolimits_{i=0}^{t} \lambda^{t+1-i} \left\| \boldsymbol{g}_m \left( \boldsymbol{\theta}_i \right) \right\| \right] \nonumber\\
& \;\;\;\; + \eta \mathbb{E} \left[ \frac{1}{M} \sum\nolimits_{m=1}^{M} \sum\nolimits_{i=0}^{t-1} \lambda^{t-i} \left\| \boldsymbol{g}_m \left( \boldsymbol{\theta}_i \right) \right\| \right]\nonumber\\
& \;\;\;\; + \eta \left\| \boldsymbol{\omega} (t) \right\| \mathbb{E} \left[ \sigma_{\omega}(t) \right],\nonumber\\
& \secondinequalityFour \eta \lambda \left(  \frac{(1+ \lambda)(1-\lambda^t)}{1-\lambda} +1 \right)G \nonumber\\
& \;\;\;\; + \eta \rho(\delta) \frac{\sigma}{M \sqrt{P_t}} \left( \sigma_{{\rm{max}}} \bigg( \frac{1-\lambda^{t+1}}{1-\lambda} \bigg) G + 1 \right),
\end{align}}
where (a) follows from the definition $\boldsymbol{\Delta}_m(t)$ in \eqref{AccumErrorUpdateItet}, (b) follows from the inequality in \eqref{EQAppLemVarianceSigmaOmega_4}, and (c) follows from Assumption \ref{AssumpVarianceGrads}, as well as the result presented in Lemma \ref{LemVarianceSigmaOmega}.

\section{Proof of Theorem \ref{TheoremMainProbError}}\label{AppTheoremMainProbError}
We define the following process for the A-DSGD algorithm with the model update estimated as \eqref{ModelUpdateAfterAssumpLem}, $t \ge 0$, 
\begin{align}\label{AppVtDef}
V_t (\boldsymbol{\theta}_t, ..., \boldsymbol{\theta}_0) \triangleq W_t (\boldsymbol{\theta}_t, ..., \boldsymbol{\theta}_0) - \eta L \sum\nolimits_{i=0}^{t-1} v(i),    
\end{align}
if $\boldsymbol{\theta}_i \notin \cal S$, $\forall i \le t$, and $V_t (\boldsymbol{\theta}_t, ..., \boldsymbol{\theta}_0) = V_{\tau_2-1} (\boldsymbol{\theta}_{\tau_2-1}, ..., \boldsymbol{\theta}_0)$, where ${\tau_2}$ denotes the smallest iteration index such that $\boldsymbol{\theta}_{{\tau_2}} \in \cal S$. When the algorithm has not succeeded to the success region, we have, $\forall t \ge 0$,
\begin{align}\label{AppV_tPlus1Inst}
&V_{t+1} (\boldsymbol{\theta}_{t+1}, ..., \boldsymbol{\theta}_0) \nonumber\\
& = W_{t+1} \bigg( \boldsymbol{\theta}_t - \eta \bigg(\frac{1}{M} \sum\limits_{m=1}^{M} \boldsymbol{g}_m^{sp} \left(\boldsymbol{\theta}_t \right) + \sigma_{\omega} (t) \boldsymbol{w} (t) \bigg), \boldsymbol{\theta}_t, ..., \boldsymbol{\theta}_0 \bigg)\nonumber\\
& \;\;\;\; - \eta L \sum\nolimits_{i=0}^{t} v(i)\nonumber\\
& \le W_{t+1} \left(\boldsymbol{\theta}_t - \eta \frac{1}{M} \sum\nolimits_{m=1}^{M} \boldsymbol{g}_m \left(\boldsymbol{\theta}_t \right), \boldsymbol{\theta}_t, ..., \boldsymbol{\theta}_0 \right) \nonumber\\
& \;\;\;\; + L \bigg\| \eta \frac{1}{M} \sum\limits_{m=1}^{M} \left( \boldsymbol{g}_m \left( \boldsymbol{\theta}_t \right)  - \boldsymbol{g}_m^{\rm{sp}} \left( \boldsymbol{\theta}_t \right) \right) - \sigma_{\omega}(t) \boldsymbol{w}(t) \bigg\| \nonumber\\
& \;\;\;\; - \eta L \sum\limits_{i=0}^{t} v(i),
\end{align}
where the inequality is due to the $L$-Lipschitz property of process $W_t$ in its first coordinate. By taking expectation of the terms on both sides of the inequality in \eqref{AppV_tPlus1Inst} with respect to the randomness of the gradients, it follows that 
\newcommand\firstinequalityFive{\mathrel{\overset{\makebox[0pt]{\mbox{\normalfont\tiny\sffamily (a)}}}{\le}}}
\begin{align}\label{AppV_tPlus1Expect}
&\mathbb{E} \left[ V_{t+1} (\boldsymbol{\theta}_{t+1}, ..., \boldsymbol{\theta}_0) \right] \nonumber\\
& \le \mathbb{E} \left[ W_{t+1} \left(\boldsymbol{\theta}_t - \eta \frac{1}{M} \sum\nolimits_{m=1}^{M} \boldsymbol{g}_m \left(\boldsymbol{\theta}_t \right), \boldsymbol{\theta}_t, ..., \boldsymbol{\theta}_0 \right) \right] \nonumber\\
& \;\;\;\;+ \eta L \mathbb{E} \left[ \Big\| \frac{1}{M} \sum\nolimits_{m=1}^{M} \left( \boldsymbol{g}_m \left( \boldsymbol{\theta}_t \right)  - \boldsymbol{g}_m^{\rm{sp}} \left( \boldsymbol{\theta}_t \right) \right) - \sigma_{\omega}(t) \boldsymbol{w}(t) \Big\| \right] \nonumber\\
& \;\;\;\;- \eta L \mathbb{E} \left[ \sum\nolimits_{i=0}^{t} v(i) \right] \nonumber\\
& \firstinequalityFive W_{t} \left(\boldsymbol{\theta}_t, ..., \boldsymbol{\theta}_0 \right) + \eta L v(t) - \eta L \sum\limits_{i=0}^{t} v(i) = V_{t} (\boldsymbol{\theta}_{t}, ..., \boldsymbol{\theta}_0), 
\end{align}
where (a) follows since $W_t$ is a rate supermartingale process, as well as the result in Lemma \ref{LemExpectationDifferenceTheta}. Accordingly, we have
\begin{align}\label{AppVtPlus1VtIneq}
\mathbb{E} \left[ V_{t+1} (\boldsymbol{\theta}_{t+1}, ..., \boldsymbol{\theta}_0) \right] \le V_{t} (\boldsymbol{\theta}_{t}, ..., \boldsymbol{\theta}_0), \quad \forall t.  
\end{align}

We denote the complement of $E_t$ by $E_t^c$, $\forall t$. With the expectation with respect to the gradients, we have
\newcommand\fifthinequal{\mathrel{\overset{\makebox[0pt]{\mbox{\normalfont\tiny\sffamily (a)}}}{\ge}}}
\newcommand\sixthinequal{\mathrel{\overset{\makebox[0pt]{\mbox{\normalfont\tiny\sffamily (b)}}}{\ge}}}
\begin{align}\label{AppProbErrorProof}
\mathbb{E} \left[ W_{0} \right] &= \mathbb{E} \left[ V_{0} \right] \fifthinequal \mathbb{E} \left[ V_{T} \right] \nonumber\\
& = \mathbb{E} \left[ V_{T} | E_T \right] {\rm{Pr}} \{ E_T \} + \mathbb{E} \left[ V_{T} | E^c_T \right] {\rm{Pr}} \{ E^c_T \} \nonumber\\
& \ge \mathbb{E} \left[ V_{T} | E_T \right] {\rm{Pr}} \{ E_T \} \nonumber\\
& = \left( \mathbb{E} \left[ W_{T} | E_T \right] - \eta L \sum\nolimits_{i=0}^{T-1} v(i) \right) {\rm{Pr}} \{ E_T \}\nonumber\\
& \sixthinequal \left( T - \eta L \sum\nolimits_{i=0}^{T-1} v(i) \right) {\rm{Pr}} \{ E_T \},
\end{align}
where (a) follows form \eqref{AppVtPlus1VtIneq}, and (b) follows since $W_t$ is rate supermartingale. Thus, for $\eta$ bounded as in \eqref{EtaBoundTheorem}, we have
\begin{align}
{\rm{Pr}} \{ E_T \} \le \frac{\mathbb{E} \left[ W_{0} \right]}{T - \eta L \sum\nolimits_{i=0}^{T-1} v(i)}.     
\end{align}
Replacing $W_{0}$ from \eqref{ProcessW_tHasNotSucceesed} completes the proof of Theorem \ref{TheoremMainProbError}.

\bibliographystyle{IEEEtran}
\bibliography{Report}

\end{document}